%% file: root.tex
\documentclass[twocolumn]{autart}

\usepackage{graphicx}  
\makeatletter
\let\c@author\relax
\makeatother
\usepackage{amsmath}
\usepackage{amssymb}
\usepackage{mathtools}
\usepackage{amsfonts}

\usepackage{physics}
\usepackage{bm}
\usepackage{xcolor}
\usepackage{wrapfig}
\usepackage[hidelinks]{hyperref}
\usepackage{upgreek}

\makeatletter
\newsavebox{\@brx}
\newcommand{\llangle}[1][]{\savebox{\@brx}{\(\m@th{#1\langle}\)}%
  \mathopen{\copy\@brx\kern-0.5\wd\@brx\usebox{\@brx}}}
\newcommand{\rrangle}[1][]{\savebox{\@brx}{\(\m@th{#1\rangle}\)}%
  \mathclose{\copy\@brx\kern-0.5\wd\@brx\usebox{\@brx}}}
\makeatother

\usepackage[backend=biber,style=authoryear,sorting=nyt,dashed=false, url=false, isbn=false, maxbibnames=5,maxcitenames=5, natbib=true]{biblatex}
\DeclareDelimFormat{finalnamedelim}{\ifnumgreater{\value{liststop}}{2}{,\space} \space\&\space}

\begin{document}

\begin{frontmatter}

\title{On the Modulating Function Method for Control Problems} 

\thanks[footnoteinfo]{The material in this paper was not presented at any conferences.}

\author[SDU]{Davi G. Accioli}\ead{davi@sdu.dk},  
\author[SDU]{Jerome Jouffroy}\ead{jerome@sdu.dk}

\address[SDU]{SDU Mechatronics, University of Southern Denmark (SDU), Alsion 2, 6400, S{\o}nderborg, Denmark}
          
\begin{keyword}                           
asymptotic stabilization; fixed-time control; modulating function method; time-varying systems.  
\end{keyword}

\begin{abstract}
The modulating function method is an algebraic framework that, thus far, has been used for state and parameter estimation, as well as fault detection, of linear, fractional-order, distributed, and some nonlinear systems. At the core of the method lies the modulating function, which can either be selected directly or be obtained as a solution to an auxiliary system. By introducing the notion of dual modulating functions and dual modulations using auxiliary systems and duality, this paper shows that this framework is not only an estimation framework, but also a controller design framework for LTV systems. In particular, necessary and sufficient conditions for the existence of the associated control laws are introduced; the well-known state feedback law is obtained as a particular case of the dual modulation approach, along with output feedback, LTI sliding mode control, the reachability gramian, and the state-transition matrix; and a new fixed-time control law is proposed for both LTI and LTV systems, including an estimate of the transient behavior. Moreover, numerical simulations of the newly proposed control law are performed, indicating similar performance levels to a benchmark LQR even when handling unmatched disturbances.
\end{abstract}

\end{frontmatter}

\input{01_Introduction}
\input{02_Preliminaries}
\input{03_Asymptotic_Control}
\input{04_Fixed_Time_Control}
\input{05_Numerical_Examples}
\input{06_Conclusion}

\printbibliography

\end{document}

%% file: 01_Introduction.tex
\section{Introduction}\label{sec: introduction}

Control theory developments are historically linked to integral transform methods, in particular because they allow for rephrasing differential equations into algebraic ones. Although this is well-known for linear control problems due to the ubiquitous Laplace transform and PID control \citep{Astrom_PID}, integral transform methods also play a crucial role in different estimation tasks for linear and even some nonlinear systems.

The modulating function method (MFM) is an algebraic estimation framework initially proposed in \cite{Shin54} as a method for parameter estimation tasks. Since then, it has been extended to state estimation and fault detection of linear and some nonlinear systems \citep{LiuKPG14, JouffroyR15, Fischer21, Noack_Polynomial_Estimation}, as well as different estimation tasks for PDEs, fractional-order, and distributed systems \citep{Folke_PDE_Advection_Diffusion_Reaction, Zhang_MFM_LTV_2024, Kirati_Fractional_PDE18}.

The main idea behind the method consists of multiplying the differential equation by a user-defined function, called the modulating function (MF), integrating both sides of the resulting equation, and applying integration by parts: a process that can be elegantly formulated using linear operators \citep{Ungarala_MFM_Bioprocesses}. By then selecting this function such that it, and its derivatives, satisfy a particular set of boundary conditions, different estimation problems can be tackled. In particular, MFs can be divided into three classes: total modulating functions, if both left and right boundaries are zero up to a certain derivative-order; left modulating functions, if their left boundary is zero while the right boundary is not identically zero; and right modulating functions, if their right boundary is zero while the left boundary is not identically zero \citep{JouffroyR15}. Through this approach, the modulating function method allows for fixed-time estimation, while also being robust to additive disturbances \citep{KorderNR22, Byrski03, Noack_Polynomial_Estimation, Acc_DT_MFM}.

Throughout the years, several families of modulating functions have been proposed, such as the polynomial family \citep{LoebCahen65}, the spline family \citep{MaletinskySplineFrequencyDomain}, and the different hyperbolic families \citep{Acc_Hyperbolic_MF}. Moreover, it was recently shown in \cite{Acc_Algebra_MF} that modulating functions are not only well-behaved algebraically, with TMFs forming an algebra, but that it is possible to construct infinitely many new MFs from any given sufficiently smooth function, as well as by combining modulating functions using addition and multiplication.

Nonetheless, instead of relying on a predefined MF, another approach was proposed in \cite{SchmidR11} (see also \cite{Noack_History_Auxiliary_Systems_MFM}), where the modulating function is obtained as the solution to an auxiliary system. This approach became a powerful tool for PDE estimation problems \citep{Fischer_flatness_based_PDE_2020, Folke_PDE_Advection_Diffusion_Reaction} and, together with the explicit use of duality, led to the direct connection between traditional linear observers and the modulating function method, having some well-known observer structures obtained as particular cases of the MFM \citep{D_Acc_MFM_to_Obs}. Additionally, the MFM was recently extended to discrete systems in  \cite{Acc_DT_MFM}, where it was also shown that the connection between the MFM and other estimator structures also exists in discrete-time. In particular, estimation methods such as least-squares parameter estimation and Kalman filtering can, rather surprisingly, be interpreted as particular cases of the MFM, indicating that the range of applicability of this approach might be bigger than anticipated.

One question that naturally arises in this context is if the modulating function method can be used for control purposes, especially due to the duality between estimation and control tasks \citep[pp.~235-239]{Kalman_General_Theory_Control, Callier_linear_system_theory} and the growing use of auxiliary systems within the modulating function community \citep{Fischer21, D_Acc_MFM_to_Obs, Folke_PDE_Advection_Diffusion_Reaction}. So far, there have been three mentions of how the MFM can be used for input generation: a PID with a feedforward term that is obtained from parameter estimation \citep{Ndoye_Intelligent_PID_MFM}; estimation of the input with control law $\mathbf{u}=-\mathbf{K}\mathbf{x}$ using a functional observer for infinite-dimensional LTI systems \citep{Folke_MF_State_Estimation_and_Feedback}; and, recently, controller design for a first-order single-input scalar differential equation \citep{Wang_MF_Controller_Design}. Although this indicates that there is some interest in the topic, it remains drastically unexplored. In particular, \cite{Ndoye_Intelligent_PID_MFM} discusses how parameter estimates can be used in the control of a particular second-order system, without making use of auxiliary systems or duality; \cite{Folke_MF_State_Estimation_and_Feedback} does not discuss controller design, but combining the estimation step with the controller gains $\mathbf{K}$; while \cite{Wang_MF_Controller_Design} considers only first-order control-affine scalar differential equations with a single input, not making any use of duality or auxiliary systems, and without relating the proposed approach to any existing controller architectures.

To fill this knowledge gap, this paper extends the traditional MFM by defining dual modulating functions and dual modulations, naturally leading to a controller design framework for LTV MIMO systems of arbitrary order. In particular, the main contributions of this paper are:
\begin{itemize}
    \item To introduce dual modulating functions and dual modulations, as well as to discuss their properties;
    \item To formally introduce the MFM as a controller design method using the dual modulation approach;
    \item To prove that well-know controller architectures, such as full-state feedback and LTI sliding mode control, can be obtained as particular selections of dual modulations;
    \item To define and discuss the self-modulation operator and the controllability bracket, along with their relation to the existence of finite-time stabilizing controllers;
    \item To propose an MF-based control law for LTV MIMO systems that allows for fixed-time convergence, with estimates of transient trajectories and numerical errors;
    \item To illustrate the proposed concepts with different examples, including numerical simulations.
\end{itemize}

Note that, although it is possible to extend the results of this paper to discrete-time systems by considering the results in \cite{Acc_DT_MFM}, this is not done in this document for two main reasons: first, clarity of exposition but, more importantly, because it is already well-known that discrete system can be stabilized in finite-time \citep[pp.~379-381]{Ogata_Discrete_Systems}. 

After this introduction, Section \ref{sec: preliminaries} briefly introduces the modulating function method to the reader, together with the notion of dual modulating functions. Then, Section \ref{sec: from the MFM to asymptotic control} introduces dual modulations, the modulation-based control framework, and relates it to state and output feedback, as well as sliding mode control. Next, Section \ref{sec: fixed-time control and feasibility tests} discusses the existence of finite-time stabilizing controllers, introduces the self-modulation and controllability bracket tests, as well as a newly introduced fixed-time controller and its expected transient behavior. Lastly, illustrative numerical simulations are shown in Section \ref{sec: numerical simulations}, followed by some concluding remarks in Section \ref{sec: conclusion}.

%% file: 02_Preliminaries.tex
\section{Motivation and Preliminaries}\label{sec: preliminaries}

In this section, the traditional modulating function method is briefly introduced, starting with the definition of modulating functions, followed by the motivation and definition of dual modulating functions. For some illustrations of MFs, see, e.g., \cite{Acc_Hyperbolic_MF}.

\begin{defn}
    Let $\varphi: [0, T] \mapsto \mathbb{R}$ be a $\mathcal{C}^{q-1}([0, T])$ smooth function, where $q \in \mathbb{N}_{>0}$ and $T \in \mathbb{R}_{>0}$. If the function $\varphi(\tau)$ satisfies \citep{JouffroyR15}
    \begin{equation}
        \varphi^{(i)}(0)\cdot\varphi^{(i)}(T)=0
    \end{equation}
    for $i\in \{0, 1, \ldots, q-1\}$, then it is called a modulating function. In addition,
    \begin{itemize}
        \item if $\varphi^{(i)}(0)=\varphi^{(i)}(T)=0 ~\forall i$, then it is called a total modulating function (TMF);
        \item if $\varphi^{(i)}(0)=0 ~\forall i$ and $\exists i$ s.t. $\varphi^{(i)}(T) \neq 0$, then it is called a left modulating function (LMF);
        \item if $\varphi^{(i)}(T)=0 ~\forall i$ and $\exists i$ s.t. $\varphi^{(i)}(0) \neq 0$, then it is called a right modulating function (RMF).
    \end{itemize}
\end{defn}

While TMFs are often used in parameter estimation settings \citep{MaletinskySplineFrequencyDomain, Ungarala_MFM_Bioprocesses}, LMFs and RMFs are commonly used for state estimation tasks \citep{LiuKPG14, JouffroyR15, Acc_Hyperbolic_MF, Acc_DT_MFM}. In addition, LMFs and RMFs that evaluate to $1$ at the boundary are useful in later sections, and, for that reason, receive their own name.

\begin{defn}
    Let $\varphi_{\text{L}}: [0, T] \mapsto \mathbb{R}$ be an LMF and $\varphi_{\text{R}}: [0, T] \mapsto \mathbb{R}$ an RMF. If $\varphi_{\text{L}}(\tau)$ satisfies $\varphi_{\text{L}}(T)=1$, then it is called a unitary LMF. Similarly, if $\varphi_{\text{R}}(0)=1$, then $\varphi_{\text{R}}(\tau)$ is called a unitary RMF.
\end{defn}

Now, we motivate the definition of dual modulating functions with the following example.

\begin{exmp}\label{example: motivating example}
    Consider the mass-spring-damper system
    \begin{equation}\label{eq: mass-damper IO representation}
        m\ddot{\mathrm{y}} + b\dot{\mathrm{y}}+k\mathrm{y}=\mathrm{u},
    \end{equation}
    where $\mathrm{y}(t)$ is the position, $m \in \mathbb{R}_{> 0}$ is the system's inertia, $b \in \mathbb{R}_{> 0}$ is the damping ratio, $k \in \mathbb{R}_{>0}$ is the spring constant, and $\mathrm{u}(t) \in \mathbb{R}$ is the input force. Now, suppose that it is desired to go from an initial position $\mathrm{y}(t_0)$ and velocity $\dot{\mathrm{y}}(t_0)$ to a final position $\mathrm{y}_d(t_1)$ and velocity $\dot{\mathrm{y}}_d(t_1)$.
    
    By multiplying both sides of \eqref{eq: mass-damper IO representation} by a modulating function $\varphi(t_1-\tau)$ and integrating from $t_0$ to $t_1$, one obtains
    \begin{equation}
    \begin{aligned}
        \int_{t_0}^{t_1} \varphi(t_1-\tau) \mathrm{u}(\tau) \dd \tau=\int_{t_0}^{t_1} \varphi(t_1-\tau) m\ddot{\mathrm{y}}(\tau) \dd \tau \\+ \int_{t_0}^{t_1} \varphi(t_1-\tau)b\dot{\mathrm{y}}(\tau) \dd \tau + \int_{t_0}^{t_1} \varphi(t_1-\tau)k\mathrm{y}(\tau) \dd \tau.
    \end{aligned}
    \end{equation}
    Next, define $T:=t_1-t_0$, apply integration by parts, and reorganize, leading to
    \begin{equation}\label{eq: intermediate step example 1}
    \begin{aligned}
        \varphi(0) \dot{\mathrm{y}}(t_1) + \Bigl(\varphi'(0)
        + \frac{b}{m} \varphi(0) \Bigr) \mathrm{y}(t_1) = \\ \varphi(0) \dot{\mathrm{y}}_d(t_1) + \Bigl(\varphi'(0)
        + \frac{b}{m} \varphi(0) \Bigr) \mathrm{y}_d(t_1)+f(\mathrm{u}),
    \end{aligned}
    \end{equation}
    where $f(\mathrm{u}) \in \mathbb{R}$ is defined as
    \begin{equation}\label{eq: siso lti example integral equation}
        f\hspace{-0.06cm}:=\hspace{-0.13cm}\int_{t_0}^{t_1} \hspace{-0.14cm} \frac{1}{m}\varphi(t_1-\tau) \mathrm{u}(\tau) -\Bigl(\varphi''+\frac{b}{m}\varphi' +\frac{k}{m} \varphi \Bigr)(t_1-\tau) \mathrm{y}(\tau) \dd \tau \hspace{-0.05cm}.
    \end{equation} 
    Then, selecting $\mathrm{u}(t)$ such that
    \begin{equation}
    \begin{aligned}
        f(\mathrm{u})=\varphi(0) \dot{\mathrm{y}}_d(t_1) + \Bigl(\varphi'(0)
        + \frac{b}{m} \varphi(0) \Bigr) \mathrm{y}_d(t_1)\\ -\varphi(T) \dot{\mathrm{y}}(t_0)+ \Bigl( \varphi'(T) - \frac{b}{m} \varphi(T) \Bigr) \mathrm{y}(t_0)
    \end{aligned}         
    \end{equation}
    leads to \eqref{eq: intermediate step example 1} being rearranged into
    \begin{equation}
    \begin{aligned}
        \varphi(0) \dot{\mathrm{y}}(t_1) + \Bigl(\varphi'(0)
        + \frac{b}{m} \varphi(0) \Bigr) \mathrm{y}(t_1) = \\ \varphi(0) \dot{\mathrm{y}}_d(t_1) + \Bigl(\varphi'(0)
        + \frac{b}{m} \varphi(0) \Bigr) \mathrm{y}_d(t_1).
    \end{aligned}
    \end{equation}
    Since this is valid for any MF $\varphi$, it follows that $\mathrm{y}(t_1)=\mathrm{y}_d(t_1)$ and $\dot{\mathrm{y}}(t_1)=\dot{\mathrm{y}}_d(t_1)$.
\end{exmp}

This simple example illustrates the core of the proposed approach: selecting $\mathrm{u}(t)$ as a solution to an integral equation, which is dictated by a linear combination of the modulating functions and model coefficients. 

\begin{exmp}
    One direct parallel can be traced to flatness theory by noting that, if the modulating function $\varphi(\tau)$ in Example \ref{example: motivating example} is selected such that\footnote{Technically, the initial or final point of $\mathrm{y}(\tau)$ would have to be $0$ for $\varphi$ to be a modulating function in this case. Another approach would be to consider, e.g., $\Tilde{\mathrm{y}}(\tau):=\mathrm{y}(\tau)-\mathrm{y}(t_1)$ instead.}
    \begin{equation}\label{eq: output from MF flatness}
        \mathrm{y}(\tau)=\frac{1}{m}\varphi(t_1-\tau),
    \end{equation}
    Assume for simplicity that $f=0$. Then, \eqref{eq: siso lti example integral equation} is equivalent to
    \begin{equation}
    \begin{aligned}
        &\int_{t_0}^{t_1} \hspace{-0.1cm} \frac{1}{m}\varphi(t_1-\tau) \mathrm{u}(\tau) \dd \tau=\\& \int_{t_0}^{t_1} \hspace{-0.1cm} \frac{1}{m}\varphi(t_1-\tau) \Bigl(\varphi''+\frac{b}{m}\varphi' +\frac{k}{m} \varphi \Bigr)(t_1-\tau) \dd \tau,
    \end{aligned}
    \end{equation}
    which clearly has a solution given by
    \begin{equation}\label{eq: input from MF flatness}
        \mathrm{u}(\tau)=\varphi''(t_1-\tau)+\frac{b}{m}\varphi'(t_1-\tau) +\frac{k}{m} \varphi(t_1-\tau).
    \end{equation}
     Equations \eqref{eq: output from MF flatness} and \eqref{eq: input from MF flatness} are the well-known flatness approach to SISO LTI systems, having the function $\varphi(\tau)$ being the flat output \citep[p.~15]{Sira_Ramirez_Differentially_Flat_Systems}. Since this example can easily be extended to arbitrary LTI systems, flatness theory can, at least in some cases, be interpreted as a particular case of the modulating function framework.
\end{exmp}

From these examples, it is clear that the linear combination of the modulating function and its derivatives are crucial to this framework. Thus, we now formally define dual modulating functions for SISO LTI systems in input-output form, i.e.
\begin{equation}\label{eq: SISO LTI IO form}
    \mathrm{y}^{(n)}+\sum_{i=0}^{n-1} a_i \mathrm{y}^{(i)} =\sum_{i=0}^{n-1} b_j \mathrm{u}^{(j)},
\end{equation}
where $a_i, b_i \in \mathbb{R}$ for $i\in \{0, 1, \ldots, n-1\}$.

\begin{defn}
    Consider a SISO LTI system given by \eqref{eq: SISO LTI IO form} and let $\varphi:[0, T] \mapsto \mathbb{R}$ be a modulating function of order $q\geq n$. Then, a dual modulating function is defined as
    \begin{equation}\label{eq: dual modulating function definition}
        \gamma(\tau):= \varphi^{(n)}(\tau)+ \sum_{i=0}^{n-1} a_i \varphi^{(i)}(\tau).
    \end{equation}
\end{defn}

Note that the dual modulating function $\gamma(\tau)$ is the analogous of substituting $\mathrm{y}(t)$ by $\varphi(\tau)$ on the LHS of \eqref{eq: SISO LTI IO form}: a structure that naturally leads to a controllable canonical form. In addition, they are linear combinations of a modulating function and its derivatives, indicating that they are not necessarily modulating functions, as discussed in \cite{Acc_Algebra_MF}. Nonetheless, given a dual modulating function $\gamma(\tau)$, the differential equation \eqref{eq: dual modulating function definition} can always be solved for $\varphi(\tau)$.

Although the input-output form shown in \eqref{eq: SISO LTI IO form} is useful, in particular for parameter estimation problems, continuing the analysis in this framework would lead to issues when considering MIMO systems or attempting to relate the proposed approach to modern control theory without restricting the method to, e.g., systems in controllable canonical form \citep{Noack_History_Auxiliary_Systems_MFM}. For this reason, we now define the modulation operator, which is directly applicable to both scalar differential equations and state-space representations. Following \cite{D_Acc_MFM_to_Obs}, the modulation operator is defined through a convolution-based formulation.
\begin{defn} \label{def: modulation operator}
    A modulation operator is given by
    \begin{equation}\label{eq: modulation definition}
        \langle \mathbf{g}, \mathbf{x} \rangle(t_1, t_0) := \int_{t_0}^{t_1} \mathbf{g}^\top (t_1-\tau) \mathbf{x} (\tau) \dd \tau \text{,}
    \end{equation}
where $\mathbf{g}(\tau) \in \mathbb{R}^{n \times n}$ is the modulation kernel, $\mathbf{x}(\tau) \in \mathbb{R}^{n}$ is the signal being modulated, and the filtering window length is given as $T:=t_1-t_0$, with $0<t_0<t_1$.
\end{defn}

To avoid clutter, the dependency on $t_0$ and $t_1$ is omitted, and it is important to note that, to have such a modulation operator, it is sufficient that both functions within the operator are locally square-integrable. 

In addition, by applying the modulation operator to $\mathbf{x}(t)$, the time variable $t$ is changed to the dummy variable $\tau$, i.e. $\langle \mathbf{g}(\tau), \mathbf{x}(t)\rangle$ leads to \eqref{eq: modulation definition}. Hence, $\dot{\mathbf{x}}$ and $\mathbf{x}'$ are equivalent in this context.

A central property of the modulation operator is that it can transfer derivatives from one term of the operator to the other.

\begin{lem}\label{lemma: transfer of derivatives}
    Let $\mathbf{g}(\tau) \in \mathbb{R}^{n \times n}$ be differentiable in $[0, T]$ and $\mathbf{x}(\tau) \in \mathbb{R}^n$ be differentiable in $[t_0, t_1]$. Then, applying modulation operator \eqref{eq: modulation definition} to the derivative of a signal $\mathbf{x}(\tau)$ transfers the derivative to the modulation kernel, i.e.
    \begin{equation}
        \langle \mathbf{g}, \mathbf{x}' \rangle=\mathbf{g}^\top(0) \mathbf{x} (t_1) -\mathbf{g}^\top(T) \mathbf{x} (t_0) + \langle \mathbf{g}', \mathbf{x}\rangle \text{.}
    \label{eq: transfer of derivatives}
    \end{equation}
\end{lem}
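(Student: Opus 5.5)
The plan is a direct integration by parts on the definition \eqref{eq: modulation definition}, keeping track of the sign change produced by the reversed argument $t_1-\tau$ of the kernel. Start by writing
\begin{equation*}
\langle \mathbf{g}, \mathbf{x}' \rangle = \int_{t_0}^{t_1} \mathbf{g}^\top(t_1-\tau)\,\mathbf{x}'(\tau)\,\dd\tau ,
\end{equation*}
and view the integrand as the product of the matrix-valued function $\tau \mapsto \mathbf{g}^\top(t_1-\tau)$ and the vector-valued function $\mathbf{x}'(\tau)$.

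The key step is the product rule
\begin{equation*}
\frac{\dd}{\dd\tau}\Bigl(\mathbf{g}^\top(t_1-\tau)\mathbf{x}(\tau)\Bigr) = -\bigl(\mathbf{g}'\bigr)^\top(t_1-\tau)\,\mathbf{x}(\tau) + \mathbf{g}^\top(t_1-\tau)\,\mathbf{x}'(\tau).
\end{equation*}
It holds entrywise, because each component of $\mathbf{g}^\top(t_1-\tau)\mathbf{x}(\tau)$ is a finite sum of products of scalar differentiable functions. The minus sign comes from the chain rule applied to $t_1-\tau$. Transposition commutes with differentiation, so $(\mathbf{g}^\top)'=(\mathbf{g}')^\top$.

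Integrating this identity over $[t_0,t_1]$ and applying the fundamental theorem of calculus, the left-hand side becomes
\begin{equation*}
\mathbf{g}^\top(t_1-t_1)\mathbf{x}(t_1)-\mathbf{g}^\top(t_1-t_0)\mathbf{x}(t_0)=\mathbf{g}^\top(0)\mathbf{x}(t_1)-\mathbf{g}^\top(T)\mathbf{x}(t_0),
\end{equation*}
using $T=t_1-t_0$. On the right-hand side, the first term integrates to $-\langle \mathbf{g}', \mathbf{x}\rangle$ and the second to $\langle \mathbf{g}, \mathbf{x}' \rangle$. Rearranging gives exactly \eqref{eq: transfer of derivatives}.

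The only delicate point is regularity. The fundamental theorem of calculus requires the derivative of the product to be integrable, which is not guaranteed by mere differentiability. I would handle this in one of two ways:
\begin{itemize}
\item assume $\mathbf{g}'$ and $\mathbf{x}'$ are integrable, for instance continuous or locally square-integrable, consistent with the remark after Definition \ref{def: modulation operator};
\item or require absolute continuity of both functions, in which case the integration-by-parts formula holds in the Lebesgue sense.
\end{itemize}
With that caveat stated, the remaining steps are purely mechanical.
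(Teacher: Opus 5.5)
Your proof is correct and follows the paper's argument: the paper also uses $\frac{\dd}{\dd\tau}\mathbf{g}(t_1-\tau)=-\mathbf{g}'(t_1-\tau)$ and then integrates by parts, which gives the boundary terms $\mathbf{g}^\top(0)\mathbf{x}(t_1)-\mathbf{g}^\top(T)\mathbf{x}(t_0)$. Your observation that mere differentiability does not make the derivatives integrable, so an integrability or absolute-continuity assumption is needed, is a valid refinement that the paper's one-line proof leaves implicit.
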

    
\begin{pf}
    Assuming that $\mathbf{g}(\tau) \in \mathbb{R}^{n \times n}$ is differentiable in $[0, T]$ and $\mathbf{x}(\tau) \in \mathbb{R}^n$ is differentiable in $[t_0, t_1]$, start from Definition \ref{def: modulation operator}, consider the modulation of $\mathbf{x}'$, i.e. $\langle \mathbf{g}, \mathbf{x}' \rangle$, and note that
    \begin{equation*}
        \frac{\dd ~\mathbf{g}(t_1-\tau)}{\dd \tau}=-\mathbf{g}'(t_1-\tau).
    \end{equation*}
    Then, by applying integration by parts, expression \eqref{eq: transfer of derivatives} is obtained. \hfill $\blacksquare$
\end{pf}

The class of systems discussed in the remaining of this paper are linear time-varying systems, i.e.
\begin{equation}
    \dot{\mathbf{x}}(t)=\mathbf{A}(t) \mathbf{x}(t) +\mathbf{B}(t) \mathbf{u}(t)\label{eq: state eq}
\end{equation}
for  $\mathbf{A}(t) \in \mathbb{R}^{n \times n}$ and $\mathbf{B}(t) \in \mathbb{R}^{n \times m}$, which are assumed to be square integrable and continuous with respect to $t$, having the states assumed to be known at each time instant. For simplicity and clarity of exposition, the time dependency is often omitted, but it should be clear from the context what is and is not time-varying.

As a brief reminder to the reader, a property that is largely exploited throughout this paper is the adjoint modulation of time-varying real matrices, i.e.
\begin{equation}
    \langle \mathbf{g}(\tau), \mathbf{A}(\tau) \mathbf{x}(\tau)\rangle=\langle \mathbf{A}^\top(t_1-\tau) \mathbf{g}(\tau), \mathbf{x}(\tau) \rangle=:\langle \mathbf{A}^* \mathbf{g}, \mathbf{x} \rangle
    \label{eq: adjoint operator}
\end{equation}
which naturally follows from Definition \ref{def: modulation operator}. Lastly, one directly deduces that $\mathbf{A}^*=\mathbf{A}^\top$ for time-invariant $\mathbf{A}$, as used in most of \cite{D_Acc_MFM_to_Obs}. 

%% file: 03_Asymptotic_Control.tex
\section{Dual Modulations and Asymptotic Control}\label{sec: from the MFM to asymptotic control}

Due to the inherent duality between estimation and control, we now define the dual modulation kernel, which is the main object of this section.

\begin{defn}
    Consider modulation operator \eqref{eq: modulation definition} and a system given by \eqref{eq: state eq}. Then, a dual modulation kernel $\bm{\uplambda}: [0, T] \mapsto \mathbb{R}^{n \times n}$ is defined as
    \begin{equation}\label{eq: dual modulation definition}
        \bm{\uplambda}(\tau):=\mathbf{g}'(\tau)-\mathbf{A}^*(\tau) \mathbf{g}(\tau).
    \end{equation}
\end{defn}

Note the resemblance between \eqref{eq: dual modulating function definition} and \eqref{eq: dual modulation definition}: similar to dual modulating functions, the dual modulation kernel is obtained from a linear combination of the modulation kernel and its derivative. Consequently, there are no restrictions on $\bm{\uplambda}$ besides integrability. More importantly, dual modulations lead to an auxiliary system that is dual to the state equation \eqref{eq: state eq}, as seen below.

\begin{prop}\label{prop: input modulation equal state modulation}
    Given modulation operator \eqref{eq: modulation definition}, a system described by \eqref{eq: state eq}, and a dual modulation \eqref{eq: dual modulation definition}, the relation
    \begin{equation}
        \mathbf{g}^{\top}(0)\mathbf{x}(t_1)= \mathbf{g}^{\top}(T) \mathbf{x}(t_0)+\langle\mathbf{v}, \mathbf{u}\rangle - \langle\bm{\uplambda} , \mathbf{x}\rangle
        \label{eq: input modulation as state modulation}
    \end{equation}
    is true for $\mathbf{g}(\tau) \in \mathbb{R}^{n \times n}$ and $\mathbf{v}(\tau) \in \mathbb{R}^{m \times n}$ satisfying the dual state equation
    \begin{gather}
        \mathbf{g}'=\mathbf{A}^* \mathbf{g} + \bm{\uplambda} \label{eq: aux system state eq}\\
        \mathbf{v}=\mathbf{B}^* \mathbf{g} \label{eq: aux system out eq} \text{,}
    \end{gather}
    having the dual modulation $\bm{\uplambda}(\tau) \in \mathbb{R}^{n \times n}$ as the design functional.
\end{prop}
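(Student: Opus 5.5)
The plan is to modulate both sides of the state equation \eqref{eq: state eq} with the kernel $\mathbf{g}$. Then I move every term onto the modulation kernel using Lemma \ref{lemma: transfer of derivatives} and the adjoint property \eqref{eq: adjoint operator}. Concretely, I apply $\langle \mathbf{g}, \cdot\rangle$ to $\dot{\mathbf{x}}=\mathbf{A}\mathbf{x}+\mathbf{B}\mathbf{u}$ and use linearity of the modulation operator \eqref{eq: modulation definition} in its second argument. This gives
\begin{equation*}
\langle \mathbf{g}, \mathbf{x}'\rangle = \langle \mathbf{g}, \mathbf{A}\mathbf{x}\rangle + \langle \mathbf{g}, \mathbf{B}\mathbf{u}\rangle .
\end{equation*}
Here $\mathbf{g}$ is assumed differentiable on $[0,T]$; this is implicit in \eqref{eq: aux system state eq} once $\bm{\uplambda}$ is integrable, understanding $\mathbf{g}$ as absolutely continuous. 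The state $\mathbf{x}$ is differentiable on $[t_0,t_1]$ as a solution of \eqref{eq: state eq}.

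On the left-hand side I invoke Lemma \ref{lemma: transfer of derivatives}, which gives $\mathbf{g}^\top(0)\mathbf{x}(t_1)-\mathbf{g}^\top(T)\mathbf{x}(t_0)+\langle \mathbf{g}',\mathbf{x}\rangle$. On the right-hand side I use \eqref{eq: adjoint operator} twice. The first term becomes $\langle \mathbf{A}^*\mathbf{g},\mathbf{x}\rangle$. The second becomes $\langle \mathbf{B}^*\mathbf{g},\mathbf{u}\rangle$, where $\mathbf{B}^*(\tau):=\mathbf{B}^\top(t_1-\tau)$ is defined exactly as for $\mathbf{A}$, now with $\mathbf{B}^*\mathbf{g}\in\mathbb{R}^{m\times n}$. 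By the output equation \eqref{eq: aux system out eq}, this second term is $\langle \mathbf{v},\mathbf{u}\rangle$.

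Rearranging then yields
\begin{equation*}
\mathbf{g}^\top(0)\mathbf{x}(t_1)=\mathbf{g}^\top(T)\mathbf{x}(t_0)+\langle\mathbf{v},\mathbf{u}\rangle-\langle \mathbf{g}'-\mathbf{A}^*\mathbf{g},\mathbf{x}\rangle ,
\end{equation*}
and substituting \eqref{eq: aux system state eq}, equivalently the definition \eqref{eq: dual modulation definition}, replaces $\mathbf{g}'-\mathbf{A}^*\mathbf{g}$ by $\bm{\uplambda}$. This gives exactly \eqref{eq: input modulation as state modulation}.

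The only delicate point, and the step I expect to need most care, is the bookkeeping of dimensions and of the time reversal in the adjoint. For the input term one must check that $\int_{t_0}^{t_1}\mathbf{g}^\top(t_1-\tau)\mathbf{B}(\tau)\mathbf{u}(\tau)\dd\tau$ equals $\int_{t_0}^{t_1}(\mathbf{B}^\top(\tau)\mathbf{g}(t_1-\tau))^\top\mathbf{u}(\tau)\dd\tau$. That is, the adjoint must be evaluated at the reversed argument, $\mathbf{B}^*\mathbf{g}(s)=\mathbf{B}^\top(t_1-s)\mathbf{g}(s)$ with $s=t_1-\tau$, so that the modulation convention with $\mathbf{v}^\top(t_1-\tau)$ is respected. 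A secondary point is that the regularity needed for integration by parts must hold with $\bm{\uplambda}$ merely integrable. I would handle this by noting that the integration-by-parts formula of Lemma \ref{lemma: transfer of derivatives} remains valid for absolutely continuous $\mathbf{g}$.
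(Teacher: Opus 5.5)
Your proposal is correct and follows essentially the same route as the paper: you modulate the state equation, transfer the derivative onto $\mathbf{g}$ via Lemma \ref{lemma: transfer of derivatives}, move $\mathbf{A}$ and $\mathbf{B}$ onto the kernel with the adjoint \eqref{eq: adjoint operator}, and then identify $\mathbf{g}'-\mathbf{A}^*\mathbf{g}=\bm{\uplambda}$ and $\mathbf{B}^*\mathbf{g}=\mathbf{v}$. The paper leaves implicit the details you check: the time-reversed adjoint of $\mathbf{B}$, and absolute continuity for the integration by parts, which also covers $\mathbf{x}$ when $\mathbf{u}$ is only integrable.
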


\begin{pf}
    By applying modulation operator \eqref{eq: modulation definition} to \eqref{eq: state eq}
    \begin{equation}
        \langle\mathbf{g}, \dot{\mathbf{x}}\rangle=\langle\mathbf{g}, \mathbf{A} \mathbf{x}\rangle + \langle\mathbf{g},\mathbf{B} \mathbf{u}\rangle\text{,}
    \end{equation}
    and using Lemma \ref{lemma: transfer of derivatives} on the LHS, one obtains
    \begin{equation}
         \mathbf{g}^\top(0) \mathbf{x}(t_1) -\mathbf{g}^\top(T) \mathbf{x}(t_0) + \langle\mathbf{g}',\mathbf{x}\rangle=\langle\mathbf{g},\mathbf{A} \mathbf{x}\rangle + \langle\mathbf{g},\mathbf{B} \mathbf{u}\rangle .
    \end{equation}
    It is then possible to group the modulations as
    \begin{equation*}
         \mathbf{g}^\top(0) \mathbf{x}(t_1) -\mathbf{g}^\top(T) \mathbf{x}(t_0) + \langle\mathbf{g}'-\mathbf{A} ^*\mathbf{g} ,\mathbf{x}\rangle = \langle\mathbf{B}^*\mathbf{g}, \mathbf{u}\rangle
    \end{equation*}
    and, by using a dual modulation kernel \eqref{eq: dual modulation definition}, one obtains \eqref{eq: aux system state eq}-\eqref{eq: aux system out eq} and \eqref{eq: input modulation as state modulation}. \hfill $\blacksquare$
\end{pf}

Thus far, the use of auxiliary systems within the modulating function method has focused on, for example, ensuring linear independence of modulations in a parameter estimation setting \citep{SchmidR11} and providing a bridge between observers and the MFM \citep{D_Acc_MFM_to_Obs}. However, Proposition \ref{prop: input modulation equal state modulation} shows that this duality-based approach is also useful for control-related problems.

\begin{exmp}\label{example: state-transition matrix}
    As a simple example, consider state equation \eqref{eq: state eq} and apply Proposition \ref{prop: input modulation equal state modulation}, leading to
    \begin{equation}\label{eq: state transition matrix intermediate step}
        \mathbf{g}^\top(0)\mathbf{x}(t_1)=\mathbf{g}^\top(T) \mathbf{x}(t_0)+\langle \mathbf{v}, \mathbf{u} \rangle - \langle\bm{\uplambda}, \mathbf{x}\rangle\text{.} 
    \end{equation}
    Now, by selecting the dual modulation kernel as
    \begin{equation}
        \bm{\uplambda}(\tau)=\bm{0}
    \end{equation}
    and an initial condition $\mathbf{g}^\top(0)=\mathbf{g}^*(t_1)$ such that $\rank~\mathbf{g}(0)=n$, it is possible to rearrange \eqref{eq: state transition matrix intermediate step} and expand the modulation operator to obtain
    \begin{equation}
        \begin{aligned}
            \mathbf{x}(t_1)=&\mathbf{g}^{-*}(t_1)\mathbf{g}^*(t_0) \mathbf{x}(t_0)\\&+\int_{t_0}^{t_1} \mathbf{g}^{-*}(t_1) \mathbf{g}^*(\tau) \mathbf{B}(\tau) \mathbf{u}(\tau)\dd \tau \text{,} 
        \end{aligned}
    \end{equation}
    having the notation $\mathbf{g}^{-*}(\tau):=\left(\mathbf{g}^*(\tau)\right)^{-1}$ introduced.
    
    Lastly, by defining $\bm{\Phi}(t_1, t_0):=\mathbf{g}^{-*}(t_1)\mathbf{g}^*(t_0)$, the well known LTV state-space solution is obtained as
    \begin{equation}
        \mathbf{x}(t_1)=\bm{\Phi}(t_1, t_0) \mathbf{x}(t_0) +\int_{t_0}^{t_1} \bm{\Phi}(t_1, \tau) \mathbf{B}(\tau) \mathbf{u}(\tau)\dd \tau \text{,} 
    \end{equation}
    which clearly satisfies $\bm{\Phi}(t_2, t_1)\bm{\Phi}(t_1, t_0)=\bm{\Phi}(t_2, t_0)$ and $\frac{\dd }{\dd t} \bm{\Phi}(t, t_0)=\mathbf{A} \bm{\Phi}(t, t_0)$. \hfill $\diamond$
\end{exmp}

Example \ref{example: state-transition matrix} shows that the modulation operator can be used not only to deal with estimation problems, but also to, e.g., solve state equations via dual modulations. More importantly, this indicates that the MFM can be used to provide a somewhat different perspective to control theory: control through modulations.

Despite being a simple construction, the dual modulation approach from Proposition \ref{prop: input modulation equal state modulation} allows for direct analysis of the set of reachable and controllable states of continuous-time systems. In particular, reachable and controllable states can be directly discussed in terms of the modulation kernels.

For the sake of completeness, we now recall the definition of reachability and controllability.

\begin{defn}\label{def: reachable and controllable states}
    A state $\mathbf{x}(t_1)$ is said to be reachable on $[t_0, t_1]$, for $t_0\leq t \leq t_1$, if there exists an input $\mathbf{u}(t)$ that drives the states from $\mathbf{x}(t_0)=\bm{0}$ to $\mathbf{x}(t_1)$. Similarly, a state $\mathbf{x}(t_0)$ is said to be controllable to zero on $[t_0, t_1]$, for $t_0\leq t \leq t_1$, if there exists an input $\mathbf{u}(t)$ that drives the states from $\mathbf{x}(t_0)$ to $\mathbf{x}(t_1)=\bm{0}$. \citep[p.~227]{Callier_linear_system_theory}.
\end{defn}

By simply inspecting \eqref{eq: input modulation as state modulation} and comparing it to Definition \ref{def: reachable and controllable states}, two results are trivially obtained.

\begin{cor} \label{corollary: reachability from modulation}
    Given modulation operator \eqref{eq: modulation definition}, a system described by \eqref{eq: state eq}, a dual modulation \eqref{eq: dual modulation definition}, and auxiliary system \eqref{eq: aux system state eq}-\eqref{eq: aux system out eq}, the set of reachable states is given as
    \begin{equation}\label{eq: reachability from modulation}
        \mathfrak{R}=\left\{\mathbf{x}(t_1) \in \mathbb{R}^{n}~|~ \mathbf{x}(t_1)=\mathbf{g}^{-*}(t_1) \left( \langle\mathbf{v},\mathbf{u}\rangle-\langle\bm{\uplambda}, \mathbf{x}\rangle \right) \right\}\text{,}
    \end{equation}
    for $\rank~ \mathbf{g}^*(t_1)=n$.
\end{cor}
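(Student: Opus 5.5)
The corollary follows by specializing Proposition~\ref{prop: input modulation equal state modulation} to the initial condition in Definition~\ref{def: reachable and controllable states}, and then inverting the boundary term $\mathbf{g}^\top(0)$. Throughout, I read the notation as in Example~\ref{example: state-transition matrix}: the boundary value $\mathbf{g}^\top(0)$ is identified with $\mathbf{g}^*(t_1)$. The rank hypothesis then says exactly that this matrix is invertible.

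First, fix an admissible input $\mathbf{u}$ on $[t_0,t_1]$, a dual modulation kernel $\bm{\uplambda}$, and a solution $\mathbf{g}$ of the auxiliary system \eqref{eq: aux system state eq}--\eqref{eq: aux system out eq}. By Proposition~\ref{prop: input modulation equal state modulation}, relation \eqref{eq: input modulation as state modulation} holds along the trajectory generated by $\mathbf{u}$. For a reachable state we start from $\mathbf{x}(t_0)=\bm{0}$, so the term $\mathbf{g}^\top(T)\mathbf{x}(t_0)$ vanishes. We are left with $\mathbf{g}^*(t_1)\mathbf{x}(t_1)=\langle\mathbf{v},\mathbf{u}\rangle-\langle\bm{\uplambda},\mathbf{x}\rangle$. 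Under $\rank\,\mathbf{g}^*(t_1)=n$ we multiply by $\mathbf{g}^{-*}(t_1)$ and obtain the expression inside the set $\mathfrak{R}$. Hence every reachable state belongs to $\mathfrak{R}$.

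Second, for the reverse inclusion, take any element of $\mathfrak{R}$. Each such element is produced by some input $\mathbf{u}$ together with the state trajectory $\mathbf{x}$ it generates from $\mathbf{x}(t_0)=\bm{0}$. The same identity, read backwards, shows that this trajectory ends exactly at the stated value. That endpoint is therefore reachable in the sense of Definition~\ref{def: reachable and controllable states}.

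The main obstacle is interpretive rather than computational. We must make precise that $\mathbf{x}$ inside $\langle\bm{\uplambda},\mathbf{x}\rangle$ denotes the trajectory generated by $\mathbf{u}$ from the zero initial state, so that the set is parametrized by the inputs. We must also check that the characterization does not depend on the choice of $\bm{\uplambda}$ and of $\mathbf{g}$ with nonsingular boundary value. The cleanest way to settle this is to take $\bm{\uplambda}=\bm{0}$, as in Example~\ref{example: state-transition matrix}. The set then reduces to $\{\int_{t_0}^{t_1}\bm{\Phi}(t_1,\tau)\mathbf{B}(\tau)\mathbf{u}(\tau)\dd\tau\}$, the classical reachable set. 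Since identity \eqref{eq: input modulation as state modulation} holds for every admissible kernel, any other choice gives the same endpoints for the same $\mathbf{u}$, so the set is independent of that choice.
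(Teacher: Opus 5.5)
Your proposal is correct and follows the paper's own one-line argument: set $\mathbf{x}(t_0)=\bm{0}$ in identity \eqref{eq: input modulation as state modulation} of Proposition~\ref{prop: input modulation equal state modulation}, then invert $\mathbf{g}^*(t_1)=\mathbf{g}^\top(0)$. Your extra remarks on the reverse inclusion and on independence from the choice of $\bm{\uplambda}$ and $\mathbf{g}$ only make explicit an interpretation the paper leaves implicit, namely that $\mathbf{x}$ is the trajectory generated by $\mathbf{u}$ from the zero state.
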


\begin{cor} \label{corollary: controllability to zero from modulation}
    Given modulation operator \eqref{eq: modulation definition}, a system described by \eqref{eq: state eq},a dual modulation \eqref{eq: dual modulation definition}, and auxiliary system \eqref{eq: aux system state eq}-\eqref{eq: aux system out eq}, the set of controllable-to-zero states is given as
    \begin{equation}\label{eq: controllability to zero from modulation}
        \mathfrak{C}=\{\mathbf{x}(t_0) \in \mathbb{R}^{n}~|~ \mathbf{x}(t_0)=-\mathbf{g}^{-*}(t_0)\left( \langle\mathbf{v},\mathbf{u} \rangle - \langle\bm{\uplambda},\mathbf{x}\rangle\right) \}\text{,}
    \end{equation}
    for $\rank ~\mathbf{g}^*(t_0)=n$.
\end{cor}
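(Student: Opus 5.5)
The plan is to specialize relation \eqref{eq: input modulation as state modulation} of Proposition \ref{prop: input modulation equal state modulation} to the boundary condition $\mathbf{x}(t_1)=\bm{0}$ that defines controllability to zero. This mirrors the argument behind Corollary \ref{corollary: reachability from modulation}, with the roles of the two endpoints swapped. Throughout, I identify $\mathbf{g}^\top(T)$ with $\mathbf{g}^*(t_0)$. This follows the same convention used in Example \ref{example: state-transition matrix}, where the kernel is read at argument $t_1-\tau$, so that $\tau=t_0$ corresponds to $\mathbf{g}(T)$.

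First, fix an arbitrary dual modulation $\bm{\uplambda}$ and let $\mathbf{g},\mathbf{v}$ solve the auxiliary system \eqref{eq: aux system state eq}--\eqref{eq: aux system out eq}. Choose the terminal data so that $\rank~\mathbf{g}^*(t_0)=n$; because \eqref{eq: aux system state eq} is a linear ODE, such a choice is always available. Proposition \ref{prop: input modulation equal state modulation} then holds for every input $\mathbf{u}$ and the resulting trajectory $\mathbf{x}$.

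Second, for the inclusion ``$\subseteq$'', suppose $\mathbf{x}(t_0)$ is controllable to zero, so some $\mathbf{u}$ steers it to $\mathbf{x}(t_1)=\bm{0}$. Then the left-hand side of \eqref{eq: input modulation as state modulation} vanishes, leaving $\bm{0}=\mathbf{g}^*(t_0)\mathbf{x}(t_0)+\langle\mathbf{v},\mathbf{u}\rangle-\langle\bm{\uplambda},\mathbf{x}\rangle$. Inverting $\mathbf{g}^*(t_0)$ yields exactly the expression in \eqref{eq: controllability to zero from modulation}.

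Third, for the inclusion ``$\supseteq$'', suppose $\mathbf{x}(t_0)$ satisfies the defining identity for some $\mathbf{u}$ with its trajectory. Substituting back gives $\mathbf{g}^\top(0)\mathbf{x}(t_1)=\bm{0}$, and I need to conclude $\mathbf{x}(t_1)=\bm{0}$. This is the main obstacle, because it requires $\mathbf{g}(0)$ to be invertible as well. I would handle it with a standard Liouville/state-transition argument: the homogeneous part of \eqref{eq: aux system state eq} is $\mathbf{g}'=\mathbf{A}^*\mathbf{g}$, whose transition matrix is nonsingular. If I choose $\bm{\uplambda}$ so that full rank propagates, for instance $\bm{\uplambda}=\bm{0}$ as in Example \ref{example: state-transition matrix}, then full rank of $\mathbf{g}$ at one endpoint implies full rank at the other. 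Alternatively, I would read the set as implicitly paired with $\mathbf{x}(t_1)=\bm{0}$, mirroring the proof of Corollary \ref{corollary: reachability from modulation}, where $\mathbf{x}(t_0)=\bm{0}$ is imposed. Under either reading the two sets coincide, which completes the argument.
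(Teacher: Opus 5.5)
Your step 2 is exactly the paper's argument. The paper obtains the corollary ``by simply inspecting'' \eqref{eq: input modulation as state modulation} against Definition~\ref{def: reachable and controllable states}: it sets $\mathbf{x}(t_1)=\bm{0}$ and inverts $\mathbf{g}^*(t_0)=\mathbf{g}^\top(T)$. It never attempts a reverse inclusion, so it is implicitly using your second reading.

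Your suspicion in step 3 is well founded. In fact, the two-sided set equality is false under the stated hypothesis alone. The defining identity is equivalent to $\mathbf{g}^\top(0)\mathbf{x}(t_1)=\bm{0}$, and nothing forces $\mathbf{g}(0)$ to be invertible. For instance, take $n=m=1$, $\mathbf{A}=0$, $\mathbf{B}=0$, $\bm{\uplambda}\equiv 1/T$ and $\mathbf{g}(\tau)=\tau/T$. Then $\mathbf{g}(T)=1$, $\mathbf{v}=0$ and $\langle\bm{\uplambda},\mathbf{x}\rangle=\mathbf{x}(t_0)$. So every initial state satisfies the identity, while only $\bm{0}$ is controllable to zero.

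The gap is in how you rescue the reverse inclusion. The Liouville/transition-matrix argument propagates rank only when the auxiliary system is homogeneous in $\mathbf{g}$. That means $\bm{\uplambda}$ must be a linear feedback of $\mathbf{g}$, such as $\bm{\uplambda}=\bm{0}$ or $\bm{\uplambda}=-\mathbf{K}^*\mathbf{v}$ as in Theorem~\ref{theorem: asymptotic full state feedback}. For a general forcing term it fails, as the example above shows. Replacing the given $\bm{\uplambda}$ by $\bm{0}$ also contradicts your step 1, where $\bm{\uplambda}$ was fixed arbitrarily. So it proves nothing about the dual modulation you were handed, and ``under either reading the two sets coincide'' is not justified for the first reading.

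What does work is to keep $\bm{\uplambda}$ and use the free boundary data of $\mathbf{g}$. Write $\mathbf{g}(T)=\mathbf{P}\mathbf{g}(0)+\mathbf{c}$, where $\mathbf{P}$ is the invertible transition matrix of $\mathbf{A}^*$ over $[0,T]$ and $\mathbf{c}$ is the forced response to $\bm{\uplambda}$. Choosing $\mathbf{g}(0)=s\mathbf{P}^{-1}$ with $s>\|\mathbf{c}\|$ makes both $\mathbf{g}(0)$ and $\mathbf{g}(T)=s\mathbf{I}+\mathbf{c}$ invertible, and then both of your inclusions hold. So either add $\rank~\mathbf{g}^*(t_1)=n$ to the hypotheses (it can always be arranged this way), or settle for the paired reading with $\mathbf{x}(t_1)=\bm{0}$, which is all the paper's inspection argument establishes.
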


Since one can solve the auxiliary system in either time direction for a given initial or final condition, the main requirement to have all states reachable or controllable is the same: the existence of $\mathbf{u}(t)$ such that $\langle\mathbf{v}, \mathbf{u}\rangle-\langle \bm{\uplambda}, \mathbf{x} \rangle$ spans $\mathbb{R}^n$. By then considering control laws resulting from\footnote{This is the analogous of considering $f(\mathrm{u})=0$ in \eqref{eq: siso lti example integral equation}.} $\langle\mathbf{v}, \mathbf{u}\rangle-\langle \bm{\uplambda}, \mathbf{x} \rangle=\bm{0}$, the existence of solutions is a problem that naturally emerges.

\subsection{Existence of Solutions}

Using concepts from linear operator theory and functional analysis, a necessary and sufficient condition for the existence of such control laws is obtained.

\begin{prop}\label{prop: trajectory constraint}
    Consider modulation operator \eqref{eq: modulation definition}, a system described by \eqref{eq: state eq}, a dual modulation \eqref{eq: dual modulation definition}, auxiliary system \eqref{eq: aux system state eq}-\eqref{eq: aux system out eq}, and let $\mathsf{H}(\mathbf{x})$ and $\mathsf{R}(\mathbf{u})$ be the linear operators
    \begin{equation}\label{eq: H and R map definition}
        \mathsf{R}(\mathbf{u}):=\langle\mathbf{v}, \mathbf{u}\rangle\text{,} \; \; \; \mathsf{H}(\mathbf{x}):=\langle\bm{\uplambda}, \mathbf{x}\rangle\text{.}
    \end{equation}
    Then, an input
    \begin{equation}\label{eq: input from modulation for natural trajectories}
        \mathbf{u}(t)=\{\mathbf{u}(t) \in \mathbb{R}^{m} | \langle\mathbf{v},  \mathbf{u} \rangle=\langle\bm{\uplambda}, \mathbf{x}\rangle \}
    \end{equation}
    exists if and only if
    \begin{equation}
        \text{range}(\mathsf{H}) \subseteq \text{range}(\mathsf{R}),  
    \end{equation}
    and, by selecting $\mathbf{g}^\top(0)=\mathbf{g}^*(t_1)$ such that $\rank~\mathbf{g}^*(t_1)=n$,  $\mathbf{x}(t_1)$ is obtained as
    \begin{equation}\label{eq: trajectory from modulation}
        \mathbf{x}(t_1)=\bm{\Psi}(t_1, t_0; \bm{\uplambda}) \mathbf{x}(t_0),
    \end{equation}
    for $\bm{\Psi}(t_1, t_0; \bm{\uplambda}):= \mathbf{g}^{-*}(t_1)\mathbf{g}^*(t_0)$.
\end{prop}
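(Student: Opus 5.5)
The proof has two independent parts: a range-inclusion characterization of solvability, and the computation of $\mathbf{x}(t_1)$ once such an input is applied.

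\emph{Existence.} For a fixed window $[t_0,t_1]$, the operators in \eqref{eq: H and R map definition} are linear maps from the locally square-integrable signals into $\mathbb{R}^n$: $\mathsf{R}$ acts on $\mathbf{u}$ and $\mathsf{H}$ on $\mathbf{x}$. Both are bounded by Cauchy--Schwarz, since $\mathbf{v}$ and $\bm{\uplambda}$ are square integrable on $[0,T]$.

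For sufficiency, assume $\text{range}(\mathsf{H}) \subseteq \text{range}(\mathsf{R})$. Then, for the trajectory $\mathbf{x}$ under consideration, $\mathsf{H}(\mathbf{x})$ lies in $\text{range}(\mathsf{R})$, so some $\mathbf{u}$ satisfies $\mathsf{R}(\mathbf{u})=\mathsf{H}(\mathbf{x})$. This $\mathbf{u}$ belongs to the set \eqref{eq: input from modulation for natural trajectories}.

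Because the codomain is the finite-dimensional space $\mathbb{R}^n$, an explicit preimage is available: take $\mathbf{u}=\mathbf{v}(t_1-\tau)\mathbf{W}^{\dagger}\mathsf{H}(\mathbf{x})$, where $\mathbf{W}=\langle \mathbf{v},\mathbf{v}\rangle=\int_0^T \mathbf{v}^\top\mathbf{v}\,\dd\tau$ is a gramian-like matrix. This works because $\text{range}(\mathsf{R})=\text{range}(\mathbf{W})$, which follows from the standard identity $\text{range}(\mathsf{R})=\text{range}(\mathsf{R}\mathsf{R}^*)$.

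For necessity, the input is required to exist for every admissible $\mathbf{x}$, including those produced by arbitrary initial conditions. If some value $\mathsf{H}(\mathbf{x})$ lay outside $\text{range}(\mathsf{R})$, the equation $\mathsf{R}(\mathbf{u})=\mathsf{H}(\mathbf{x})$ would have no solution for that $\mathbf{x}$, which contradicts existence.

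\emph{Trajectory.} Substitute an input from \eqref{eq: input from modulation for natural trajectories} into \eqref{eq: input modulation as state modulation} of Proposition \ref{prop: input modulation equal state modulation}. The modulation terms cancel, leaving $\mathbf{g}^\top(0)\mathbf{x}(t_1)=\mathbf{g}^\top(T)\mathbf{x}(t_0)$.

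Next, I would interpret the notation exactly as in Example \ref{example: state-transition matrix}. Writing $\mathbf{g}^*(t):=\mathbf{g}^\top(t_1-t)$, we have $\mathbf{g}^\top(0)=\mathbf{g}^*(t_1)$ and $\mathbf{g}^\top(T)=\mathbf{g}^*(t_0)$. With the initial condition chosen so that $\mathbf{g}^*(t_1)$ has full rank, left-multiplying by $\mathbf{g}^{-*}(t_1)$ gives \eqref{eq: trajectory from modulation} with $\bm{\Psi}=\mathbf{g}^{-*}(t_1)\mathbf{g}^*(t_0)$. The dependence of $\bm{\Psi}$ on $\bm{\uplambda}$ enters through the solution of \eqref{eq: aux system state eq}.

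\emph{Main obstacle.} The delicate point is that $\mathsf{H}$ acts on the state trajectory, and that trajectory itself depends on $\mathbf{u}$, so the equation $\mathsf{R}(\mathbf{u})=\mathsf{H}(\mathbf{x})$ is implicit. I would handle this by treating $\mathbf{x}$ as an independent argument, so that the range inclusion is a condition for all $\mathbf{x}$, which is exactly what the operator-theoretic statement asserts. A genuine consistency argument, for instance via a fixed-point or Volterra-equation argument using the explicit pseudo-inverse formula above, would be needed for full rigor. I would note this caveat rather than resolve it completely.
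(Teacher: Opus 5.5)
Your proposal is correct and follows the paper's proof. Solvability of $\mathsf{R}(\mathbf{u})=\mathsf{H}(\mathbf{x})$ for every $\mathbf{x}$ is read off from the definition of range, and substituting into Proposition \ref{prop: input modulation equal state modulation} leaves $\mathbf{g}^*(t_1)\mathbf{x}(t_1)=\mathbf{g}^*(t_0)\mathbf{x}(t_0)$, which is then inverted.

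Two of your points go beyond the paper. The explicit preimage through $\mathbf{W}^\dagger$ is the paper's later $\mathsf{S}(\mathbf{v})=\langle\mathbf{v},\mathbf{v}(t_1-\tau)\rangle$ from Proposition \ref{proposition: reachability gramian}; under the paper's convention it should not be written $\langle\mathbf{v},\mathbf{v}\rangle$. Your caveat that $\mathbf{x}$ itself depends on $\mathbf{u}$ is also an addition, since the paper likewise treats $\mathbf{x}$ as a free argument and leaves that implicitness unaddressed.
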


\begin{pf}
    Starting from \eqref{eq: input modulation as state modulation}, i.e.
    \begin{equation}\label{eq: input selection for natural trajectories intermediate step}
        \mathbf{g}^*(t_1)\mathbf{x}(t_1)=\mathbf{g}^*(t_0) \mathbf{x}(t_0)+\langle\mathbf{v}, \mathbf{u}\rangle-\langle\bm{\uplambda}, \mathbf{x}\rangle\text{,}
    \end{equation}
    consider the situation where it is desired to select the input $\mathbf{u}(t)$ given a dual modulation $\bm{\uplambda}(\tau)$ such that
    \begin{equation}
        \langle\mathbf{v}, \mathbf{u}\rangle-\langle\bm{\uplambda}, \mathbf{x}\rangle =\bm{0}
    \end{equation}
    or, equivalently,
    \begin{equation}\label{eq: input selection from modulation intermediate step}
        \langle\mathbf{v}, \mathbf{u} \rangle =\langle\bm{\uplambda}, \mathbf{x}\rangle\text{.}
    \end{equation}
    Let $\mathsf{H}(\mathbf{x})$ and $\mathsf{R}(\mathbf{u})$ be the linear operators defined as \eqref{eq: H and R map definition}. Then, the integral equation can be rewritten as $\mathsf{H}(\mathbf{x})=\mathsf{R}(\mathbf{u})$, and it directly follows from the definition of the range of linear operators that $\mathsf{H}(\mathbf{x})=\mathsf{R}(\mathbf{u})$ is solvable for $\mathbf{u}(t)$ given any $\mathbf{x}(t)$ iff $\text{range}(\mathsf{H}) \subseteq \text{range}(\mathsf{R})$.
    
    Lastly, by substituting \eqref{eq: input selection from modulation intermediate step} in \eqref{eq: input selection for natural trajectories intermediate step}, one obtains
    \begin{equation}
        \mathbf{g}^*(t_1) \mathbf{x}(t_1)=\mathbf{g}^*(t_0) \mathbf{x}(t_0)\text{,}
    \end{equation}
    which, by selecting any $\mathbf{g}^*(t_1)$ such that $\rank~ \mathbf{g}^*(t_1)=n$ and defining $\bm{\Psi}(t_1, t_0; \bm{\uplambda}):= \mathbf{g}^{-*}(t_1)\mathbf{g}^*(t_0)$, has \eqref{eq: trajectory from modulation} as the unique solution. \hfill $\blacksquare$
\end{pf}

An important result is shown in Proposition \ref{prop: trajectory constraint}: the final state can be selected through the matrix $\bm{\Psi}(t_1, t_0; \bm{\uplambda}):=\mathbf{g}^{-*}(t_1) \mathbf{g}^*(t_0)$ using the dual modulation $\bm{\uplambda}$, as long as the integral equation $\langle\mathbf{v},  \mathbf{u}\rangle=\langle\bm{\uplambda}, \mathbf{x}\rangle$ is solvable for $\mathbf{u}$. For this reason, the matrix $\bm{\Psi}(t_1, t_0; \bm{\uplambda})$, having its dependency on $\bm{\uplambda}$ often omitted, is similar in spirit to the state transition matrix: it takes the state $\mathbf{x}(t_0)$ to $\mathbf{x}(t_1)$.

\subsection{From the MFM to Asymptotic Control}

One approach to ensure that $\text{range}(\mathsf{H}) \subseteq \text{range}(\mathsf{R})$ is to select $\mathsf{H}$ as a composition between an arbitrary linear map $\mathsf{M}$ and $\mathsf{R}$, i.e. $\mathsf{H}=\mathsf{R} \mathsf{M}$. Surprisingly, the structure for linear full-state feedback is directly obtained through this approach.

\begin{thm}\label{theorem: asymptotic full state feedback}
    Given modulation operator \eqref{eq: modulation definition} and a system described by \eqref{eq: state eq}, the full-state feedback control law $\mathbf{u}=-\mathbf{K} \mathbf{x}$ is obtained from \eqref{eq: input from modulation for natural trajectories} by selecting the dual modulation $\bm{\uplambda}=-\mathbf{K}^* \mathbf{v}$, for $t_1=t$, $t_0=0$, and $\rank~\mathbf{g}(0)=n$, having $\mathbf{K}(t) \in \mathbb{R}^{m \times n}$ as a tuning parameter.
\end{thm}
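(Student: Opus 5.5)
The plan is to substitute the proposed dual modulation into the integral equation \eqref{eq: input from modulation for natural trajectories} and use the adjoint property \eqref{eq: adjoint operator} to move $\mathbf{K}$ from the kernel onto the signal. With $\bm{\uplambda}=-\mathbf{K}^*\mathbf{v}$, where $\mathbf{K}^*(\tau):=\mathbf{K}^\top(t_1-\tau)$ in the sense of \eqref{eq: adjoint operator}, we have
\begin{equation*}
\langle\bm{\uplambda},\mathbf{x}\rangle=-\langle\mathbf{K}^*\mathbf{v},\mathbf{x}\rangle=-\langle\mathbf{v},\mathbf{K}\mathbf{x}\rangle=\langle\mathbf{v},-\mathbf{K}\mathbf{x}\rangle .
\end{equation*}
So $\mathsf{H}=\mathsf{R}\mathsf{M}$ with $\mathsf{M}(\mathbf{x})=-\mathbf{K}\mathbf{x}$, which is exactly the composition structure announced before the theorem. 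In particular $\text{range}(\mathsf{H})\subseteq\text{range}(\mathsf{R})$, so Proposition \ref{prop: trajectory constraint} guarantees that an input satisfying \eqref{eq: input from modulation for natural trajectories} exists. Moreover, $\mathbf{u}=-\mathbf{K}\mathbf{x}$ satisfies $\langle\mathbf{v},\mathbf{u}\rangle=\langle\mathbf{v},-\mathbf{K}\mathbf{x}\rangle=\langle\bm{\uplambda},\mathbf{x}\rangle$ identically, so full-state feedback is an element of the solution set.

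Next I would check that this choice is consistent with the auxiliary system and is not circular. Substituting $\bm{\uplambda}=-\mathbf{K}^*\mathbf{B}^*\mathbf{g}$ into \eqref{eq: aux system state eq} gives the closed dual system
\begin{equation*}
\mathbf{g}'=(\mathbf{A}-\mathbf{B}\mathbf{K})^*\mathbf{g},
\end{equation*}
a linear ODE with continuous coefficients. Choosing $\mathbf{g}(0)$ of full rank, as in the statement, makes $\mathbf{g}$ nonsingular on $[0,T]$ by Liouville's formula. Hence the hypothesis $\rank~\mathbf{g}^*(t_1)=n$ of Proposition \ref{prop: trajectory constraint} holds, and $\mathbf{x}(t_1)=\bm{\Psi}(t_1,t_0)\mathbf{x}(t_0)$. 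Comparing with Example \ref{example: state-transition matrix}, $\bm{\Psi}$ is the state-transition matrix of $\mathbf{A}-\mathbf{B}\mathbf{K}$, which is exactly the closed-loop trajectory under $\mathbf{u}=-\mathbf{K}\mathbf{x}$.

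Finally, setting $t_0=0$ and $t_1=t$ makes the identity hold on every window $[0,t]$. This lets us pass from the integral identity to the pointwise feedback law.

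I expect this last step to be the main obstacle, since the integral equation alone only fixes $\mathbf{u}$ up to the null space of $\mathsf{R}$. I would argue as follows. The identity $\langle\mathbf{v},\mathbf{u}+\mathbf{K}\mathbf{x}\rangle(t,0)=\bm{0}$ holds for all $t$, and the windows are nested, so differentiating in $t$ gives
\begin{equation*}
\mathbf{v}^\top(0)\bigl(\mathbf{u}(t)+\mathbf{K}(t)\mathbf{x}(t)\bigr)+\int_0^t \mathbf{v}'^\top(t-\tau)\bigl(\mathbf{u}+\mathbf{K}\mathbf{x}\bigr)(\tau)\dd\tau=\bm{0}.
\end{equation*}
One clean way to finish is a Volterra/uniqueness argument, which needs $\mathbf{v}^\top(0)=\mathbf{g}^\top(0)\mathbf{B}(t)$ to have full column rank $m$, i.e.\ $\mathbf{B}(t)$ full column rank. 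Failing that, I would simply take $\mathbf{u}=-\mathbf{K}\mathbf{x}$ as the natural selection from the set \eqref{eq: input from modulation for natural trajectories}, which is all the theorem's phrase ``is obtained'' requires.
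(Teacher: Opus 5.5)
Your proposal is correct, and its core is the paper's own proof. The paper likewise substitutes $\bm{\uplambda}=-\mathbf{K}^*\mathbf{v}$ and obtains the closed dual system $\mathbf{g}'=(\mathbf{A}-\mathbf{B}\mathbf{K})^*\mathbf{g}$. It moves $\mathbf{K}$ across the operator to get $\mathsf{R}(\mathbf{u})=\mathsf{R}(-\mathbf{K}\mathbf{x})$, which gives the range condition of Proposition~\ref{prop: trajectory constraint}. From $\langle\mathbf{v},\mathbf{u}+\mathbf{K}\mathbf{x}\rangle=\bm{0}$ it then simply picks the zero solution instead of a null-space element of $\mathsf{R}$. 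The paper adds only the remark that $\mathbf{A}-\mathbf{B}\mathbf{K}$ must be stable for $\mathbf{g}$ to be square integrable as $T\to\infty$.

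Two of your steps go beyond the paper: identifying $\bm{\Psi}$ with the closed-loop transition matrix, and the nested-window uniqueness argument. If you keep the latter, fix one point. For time-varying $\mathbf{A}$, $\mathbf{B}$ or $\mathbf{K}$, the kernel $\mathbf{v}$ also depends on $t_1=t$, so the term under the integral is not $\mathbf{v}'^\top(t-\tau)$. A cleaner route avoids differentiation. Write $\mathbf{w}:=\mathbf{u}+\mathbf{K}\mathbf{x}$ and let $\bm{\Phi}_{\mathrm{cl}}$ be the transition matrix of $\mathbf{A}-\mathbf{B}\mathbf{K}$. Then $\langle\mathbf{v},\mathbf{w}\rangle(t,0)=\mathbf{g}^\top(0)\int_0^t\bm{\Phi}_{\mathrm{cl}}(t,\tau)\mathbf{B}(\tau)\mathbf{w}(\tau)\,\dd\tau$. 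If this vanishes for all $t$, then $\mathbf{B}\mathbf{w}\equiv\bm{0}$, so $\mathbf{u}=-\mathbf{K}\mathbf{x}$ whenever $\mathbf{B}(t)$ has full column rank.
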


\begin{pf}
    Starting from Proposition \ref{prop: input modulation equal state modulation}, select $t_1=t$, $t_0=0$, and $\bm{\uplambda}=-\mathbf{K}^* \mathbf{v}$ to obtain
    \begin{equation}
        \mathbf{g}'=\mathbf{A}^* \mathbf{g} -\mathbf{K}^* \mathbf{v} \text{,}
    \end{equation}
    which can be rewritten as
    \begin{equation}
        \mathbf{g}'=\left(\mathbf{A}- \mathbf{B}\mathbf{K}\right)^* \mathbf{g} \text{,}
    \end{equation}
    requiring $\left( \mathbf{A} -\mathbf{B} \mathbf{K} \right)$ stable for $\mathbf{g}$ to be square integrable for all $t\in \mathbb{R}\cup \infty$, a condition that can be guaranteed if the system is fully controllable.
    
    By then using Proposition \ref{prop: trajectory constraint}, the input must be obtained as the solution to
    \begin{equation}
        \langle\mathbf{v}, \mathbf{u}\rangle=\langle\bm{\uplambda}, \mathbf{x}\rangle\text{,}
    \end{equation}
    which, by using the selected $\bm{\uplambda}=-\mathbf{K}^* \mathbf{v}$, leads to
    \begin{equation}
        \langle\mathbf{v}, \mathbf{u}\rangle=\langle-\mathbf{K}^* \mathbf{v}, \mathbf{x}\rangle
    \end{equation}
    or, equivalently,
    \begin{equation}
        \mathsf{R}(\mathbf{u}):=\langle\mathbf{v}, \mathbf{u}\rangle=\langle \mathbf{v}, -\mathbf{K}\mathbf{x}\rangle=:\mathsf{R}(-\mathbf{K}\mathbf{x})\text{,}
    \end{equation}
    which clearly satisfies $\text{range}(\mathsf{H}) \subseteq \text{range}(\mathsf{R})$. Moreover, it follows that
    \begin{equation}
        \langle \mathbf{v}, \mathbf{u}+\mathbf{K}\mathbf{x} \rangle = \bm{0},
    \end{equation}    
    and there are two ways for the equation above to be identically $\bm{0}$: if $(\mathbf{u}+\mathbf{K} \mathbf{x}) \in \text{null}(\mathsf{R})$, or $(\mathbf{u}+\mathbf{K} \mathbf{x})=\bm{0}$. Using the latter approach, the input must be selected as $\mathbf{u}=-\mathbf{K} \mathbf{x}$ for some $\mathbf{K}(t) \in \mathbb{R}^{m \times n}$. \hfill $\blacksquare$

\end{pf}

\begin{exmp}
    Consider the output feedback scenario for $\mathbf{y}=\mathbf{C} \mathbf{x}$, where $\mathbf{C}(t) \in \mathbb{R}^{p \times n}$. Then, it suffices to consider the dual modulation $\bm{\uplambda}=-\mathbf{C}^*\mathbf{K}^* \mathbf{v}$, where $\mathbf{K}(t) \in \mathbb{R}^{n \times p}$, and apply Proposition \ref{prop: trajectory constraint} in the same way as in the proof of Theorem \ref{theorem: asymptotic full state feedback} to obtain
    \begin{equation}
        \langle\mathbf{v}, \mathbf{u} \rangle=\langle-\mathbf{C}^*\mathbf{K}^*\mathbf{v}, \mathbf{x} \rangle
    \end{equation}
    or, equivalently,
    \begin{equation}
        \langle\mathbf{v}, \mathbf{u}+\mathbf{K}\mathbf{y}\rangle=\bm{0},
    \end{equation}
    leading to the output feedback control law $\mathbf{u}=-\mathbf{K}\mathbf{y}$. Thus, any controller obtained from the dual modulation $\bm{\uplambda}=-\mathbf{K}^*\mathbf{v}$ or $\bm{\uplambda}=-\mathbf{C}^*\mathbf{K}^*\mathbf{v}$ satisfy $\text{range}(\mathsf{H})\subseteq \text{range}(\mathsf{R})$, and the control laws $\mathbf{u}=-\mathbf{K}\mathbf{x}$ and $\mathbf{u}=-\mathbf{K}\mathbf{y}$ are inputs given by \eqref{eq: input from modulation for natural trajectories}. \hfill $\diamond$
\end{exmp}

Another important class of controllers that can be obtained from the proposed modulation-based framework is the family of sliding mode controllers for linear systems or systems linearized by feedback, i.e. \citep[p.~60, pp.~94-95]{Utkin_Sliding_Mode}
\begin{equation}\label{eq: sliding mode controller definition}
    \mathbf{u}=-\mathbf{M}(\mathbf{x})\text{sign}(\mathbf{s}),
\end{equation}
for $\mathbf{s} \in \mathbb{R}^{m}$ being a linear combination of the states, i.e. \citep[pp.~93-94]{Utkin_Sliding_Mode}
\begin{equation}\label{eq: sliding surface defintion}
    \mathbf{s}=\mathbf{T} \mathbf{x},
\end{equation}
where $\mathbf{T} \in \mathbb{R}^{m \times n}$ and $\mathbf{M}(\mathbf{x}) \in \mathbb{R}^{m \times m}$ are tuning parameters.

\begin{thm}
    Given modulation operator \eqref{eq: modulation definition} and a system described by \eqref{eq: state eq}, the sliding mode control law \eqref{eq: sliding mode controller definition}-\eqref{eq: sliding surface defintion} is obtained from \eqref{eq: input from modulation for natural trajectories} by selecting the dual modulation $\bm{\uplambda}=-\mathbf{K}^* \mathbf{v}$, for $t_1=t$, $t_0=0$, and $\rank~\mathbf{g}(0)=n$, having $\mathbf{K}(t) \in \mathbb{R}^{m \times n}$ selected as $\mathbf{K}=\mathbf{M}(\mathbf{x}) \bm{S}(\mathbf{s}) \mathbf{T}$, where $\bm{S}(\mathbf{s}) \in \mathbb{R}^{m \times m}$ is a diagonal matrix defined as
    \begin{equation}\label{eq: definition S matrix}
        S_{i,i}(\mathbf{s}):=\left\{ \begin{array}{lcr}
             |\mathrm{s}_i|^{-1}  &, & \mathrm{s}_i \neq 0 \\
             0  &, & \mathrm{s}_i = 0
        \end{array} \right. ,
    \end{equation}
    having $\mathbf{T} \in \mathbb{R}^{m \times n}$ and $\mathbf{M}(\mathbf{x}) \in \mathbb{R}^{m \times m}$ as tuning parameters.
\end{thm}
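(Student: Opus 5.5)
The plan is to reduce the statement to Theorem~\ref{theorem: asymptotic full state feedback}, which already shows that the dual modulation $\bm{\uplambda}=-\mathbf{K}^*\mathbf{v}$ gives $\mathbf{u}=-\mathbf{K}\mathbf{x}$ as a solution of \eqref{eq: input from modulation for natural trajectories}. What remains is to check that the particular gain $\mathbf{K}=\mathbf{M}(\mathbf{x})\bm{S}(\mathbf{s})\mathbf{T}$ reproduces \eqref{eq: sliding mode controller definition}--\eqref{eq: sliding surface defintion}.

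First, fix $t_1=t$, $t_0=0$ and $\rank~\mathbf{g}(0)=n$, and substitute $\bm{\uplambda}=-\mathbf{K}^*\mathbf{v}$ into Proposition~\ref{prop: input modulation equal state modulation}. This yields the auxiliary dynamics $\mathbf{g}'=(\mathbf{A}-\mathbf{B}\mathbf{K})^*\mathbf{g}$. Here $\mathbf{K}$ depends on $\mathbf{x}$ through $\mathbf{M}(\mathbf{x})$ and $\bm{S}(\mathbf{s})$, so it is state dependent. I will treat it, along the given trajectory, as a time-varying matrix $\mathbf{K}(t)$, exactly as the tuning parameter in Theorem~\ref{theorem: asymptotic full state feedback} allows.

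Next, using the adjoint property \eqref{eq: adjoint operator}, the integral equation $\langle\mathbf{v},\mathbf{u}\rangle=\langle\bm{\uplambda},\mathbf{x}\rangle$ becomes $\langle\mathbf{v},\mathbf{u}+\mathbf{K}\mathbf{x}\rangle=\bm{0}$. This shows $\text{range}(\mathsf{H})\subseteq\text{range}(\mathsf{R})$ via $\mathsf{H}=\mathsf{R}\mathsf{M}$. Choosing the solution $\mathbf{u}+\mathbf{K}\mathbf{x}=\bm{0}$ gives $\mathbf{u}=-\mathbf{M}(\mathbf{x})\bm{S}(\mathbf{s})\mathbf{T}\mathbf{x}=-\mathbf{M}(\mathbf{x})\bm{S}(\mathbf{s})\mathbf{s}$.

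The remaining step is the componentwise identity $\bm{S}(\mathbf{s})\mathbf{s}=\text{sign}(\mathbf{s})$. Since $\bm{S}$ is diagonal, its $i$-th entry is $S_{i,i}\mathrm{s}_i$. If $\mathrm{s}_i\neq0$, this equals $\mathrm{s}_i/|\mathrm{s}_i|=\text{sign}(\mathrm{s}_i)$. If $\mathrm{s}_i=0$, it equals $0$, which agrees with the convention $\text{sign}(0)=0$. Hence $\mathbf{u}=-\mathbf{M}(\mathbf{x})\text{sign}(\mathbf{s})$.

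The main obstacle is well-posedness rather than the algebra. The gain $\mathbf{K}$ is discontinuous and unbounded as $\mathrm{s}_i\to0$, so the square integrability of $\mathbf{g}$ and $\mathbf{v}$ assumed in Definition~\ref{def: modulation operator} is not automatic. I would address this by observing that only the product $\mathbf{K}\mathbf{x}=\mathbf{M}\,\text{sign}(\mathbf{s})$ enters the control law, and this product is bounded whenever $\mathbf{M}$ is. I would then interpret the auxiliary system along solutions in the Filippov sense, or restrict to intervals on which $\mathbf{s}$ has constant sign, where $\mathbf{K}$ is continuous and Proposition~\ref{prop: trajectory constraint} applies directly.
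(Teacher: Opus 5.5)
Your proposal follows the paper's proof: you reduce to Theorem~\ref{theorem: asymptotic full state feedback} to get $\mathbf{u}=-\mathbf{K}\mathbf{x}$, substitute $\mathbf{K}=\mathbf{M}(\mathbf{x})\bm{S}(\mathbf{s})\mathbf{T}$ with $\mathbf{s}=\mathbf{T}\mathbf{x}$, and check componentwise that $\bm{S}(\mathbf{s})\mathbf{s}=\text{sign}(\mathbf{s})$. Your well-posedness discussion goes beyond the paper, which does not address the issue; note, however, that boundedness of the product $\mathbf{K}\mathbf{x}$ only controls the control law, not the unbounded $\mathbf{K}$ that appears in the auxiliary dynamics $\mathbf{g}'=(\mathbf{A}-\mathbf{B}\mathbf{K})^*\mathbf{g}$.
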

\begin{pf}
    Starting from Proposition \ref{prop: input modulation equal state modulation}, select $t_1=t$, $t_0=0$, and $\bm{\uplambda}=-\mathbf{K}^* \mathbf{v}$, following the same steps as in the proof of Theorem \ref{theorem: asymptotic full state feedback}, to obtain
    \begin{equation}\label{eq: sliding mode input first step}
        \mathbf{u}=-\mathbf{K}\mathbf{x}.
    \end{equation}
    for $\mathbf{K} \in \mathbb{R}^{m \times n}$.
    
    Then, select $\mathbf{K}$ as $\mathbf{K}=\mathbf{M}(\mathbf{x}) \bm{S}(\mathbf{s}) \mathbf{T}$, for $\bm{S}(\mathbf{s}) \in \mathbb{R}^{m \times m}$ defined as \eqref{eq: definition S matrix}, and having $\mathbf{T} \in \mathbb{R}^{m \times n}$ and $\mathbf{M}(\mathbf{x}) \in \mathbb{R}^{m \times m}$ as tuning parameters. Next, define $\mathbf{s} \in \mathbb{R}^{m}$ as in \eqref{eq: sliding surface defintion} and note that \eqref{eq: sliding mode input first step} can be rewritten as
    \begin{equation}
        \mathbf{u}=-\mathbf{M}(\mathbf{x}) \bm{S}(\mathbf{s}) \mathbf{s}.
    \end{equation}
    Finally, note that the product $\bm{S}(\mathbf{s})\mathbf{s}$ is, in fact, $\text{sign}(\mathbf{s})$ to obtain \eqref{eq: sliding mode controller definition}. \hfill $\blacksquare$
\end{pf}

\begin{rem}
    Sliding mode observers for linear systems \citep[pp.~106-107]{Yuri_Sliding_mode_control_and_observation} can also be obtained as particular cases of the MFM in a similar manner. In particular, they are obtained from equation (71) of \cite{D_Acc_MFM_to_Obs} for the selection $\mathbf{F}=\mathbf{L} \bm{S}$, where $\mathbf{L}$ is related to the observer gains and $\bm{S}$ is defined analogous to \eqref{eq: definition S matrix}.
\end{rem}

Thus, at least some traditional asymptotic controllers can be interpreted as particular selections of dual modulations with $t_1=t$, $t_0=0$, and $T=\infty$, having the stability associated with the selection of the gain matrix $\mathbf{K}$. This provides a significant link between the modulating function method and traditional control theory: while observer structures can be obtained through state feedback in the dual auxiliary system, controller structures can be obtained by selecting dual modulations via output feedback.

%% file: 04_Fixed_Time_Control.tex
\section{Fixed-Time Control and Feasibility Tests}\label{sec: fixed-time control and feasibility tests}

Besides asymptotic controllers, Proposition \ref{prop: trajectory constraint} also allows for the development of non-asymptotic controllers, which must then satisfy $\bm{\Psi}(t_1, t_0)=\bm{0}$. In particular, this leads to a necessary and sufficient condition for the finite-time stabilization of LTV systems.

\begin{rem}
    To avoid circularity in the argument, the state-transition matrix in the remaining of this document should be interpreted using the traditional approach, i.e. using fundamental matrices \citep[pp.~107-110]{ChenLinearSystems}. In addition, the initial condition is selected as $\mathbf{g}(0)=\mathbf{I}$ for the sake of simplicity but without any loss of generality.
\end{rem}

\begin{prop}\label{prop: Gamma(T)=-I}
    Consider a system given by \eqref{eq: state eq}, let $\bm{\uplambda}(\tau) \in \mathbb{R}^{n \times n}$ be a dual modulation kernel associated with $\mathbf{g}(\tau) \in \mathbb{R}^{n \times n}$ and initial condition $\mathbf{g}(0)=\mathbf{I}$, let $\bm{\Phi}(t_1, t_0) \in \mathbb{R}^{n \times n}$ be the state-transition matrix, and consider $\bm{ \Lambda}(\tau) \in \mathbb{R}^{n \times n}$ defined as
    \begin{equation}\label{eq: Gamma of h definition}
        \bm{ \Lambda}(\tau):=\int_0 ^ \tau \bm{\Phi}^\top(t_1, t_1- \sigma) \bm{\uplambda}( \sigma) \dd  \sigma.
    \end{equation}
    Then, a control law obtained from \eqref{eq: input from modulation for natural trajectories} drives the system to the origin in a finite-time $T \in \mathbb{R}_{>0}$ if and only if
    \begin{equation}\label{eq: Gamma(T)=-I}
        \bm{ \Lambda}(T)=-\mathbf{I}.
    \end{equation}
\end{prop}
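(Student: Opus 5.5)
By Proposition \ref{prop: trajectory constraint}, any control law obtained from \eqref{eq: input from modulation for natural trajectories} gives $\mathbf{x}(t_1)=\bm{\Psi}(t_1,t_0;\bm{\uplambda})\mathbf{x}(t_0)$. Here $\mathbf{g}^*(t_1)=\mathbf{g}^\top(0)=\mathbf{I}$, so $\bm{\Psi}(t_1,t_0)=\mathbf{g}^*(t_0)=\mathbf{g}^\top(T)$. Driving every initial state to the origin at $t_1=t_0+T$ is therefore equivalent to $\mathbf{g}(T)=\bm{0}$; for the ``only if'' direction we require this for all $\mathbf{x}(t_0)$. The plan is to solve the dual state equation \eqref{eq: aux system state eq} explicitly by variation of constants and to show that $\mathbf{g}(T)=\bm{\Phi}^\top(t_1,t_0)\bigl(\mathbf{I}+\bm{\Lambda}(T)\bigr)$. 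Invertibility of the state-transition matrix then gives the equivalence with \eqref{eq: Gamma(T)=-I}.

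The key step is to identify the homogeneous solution of $\mathbf{g}'=\mathbf{A}^*\mathbf{g}$, where $\mathbf{A}^*(\tau)=\mathbf{A}^\top(t_1-\tau)$. I would set $\bm{\Theta}(\tau):=\bm{\Phi}^\top(t_1-\tau,t_1)$. Using $\partial_s\bm{\Phi}(s,t_1)=\mathbf{A}(s)\bm{\Phi}(s,t_1)$ gives $\bm{\Theta}'(\tau)=-\bm{\Phi}^\top(t_1-\tau,t_1)\mathbf{A}^\top(t_1-\tau)$. This is not of the form $\mathbf{A}^*\bm{\Theta}$, so the sign and ordering must be handled carefully. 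The fundamental matrix that actually works is $\bm{\Theta}(\tau)=\bm{\Phi}^\top(t_1,t_1-\tau)$. Indeed, $\partial_s\bm{\Phi}(t_1,s)=-\bm{\Phi}(t_1,s)\mathbf{A}(s)$, so $\frac{\dd}{\dd\tau}\bm{\Phi}(t_1,t_1-\tau)=\bm{\Phi}(t_1,t_1-\tau)\mathbf{A}(t_1-\tau)$. Transposing gives $\bm{\Theta}'=\mathbf{A}^\top(t_1-\tau)\bm{\Theta}=\mathbf{A}^*\bm{\Theta}$, with $\bm{\Theta}(0)=\mathbf{I}$. This verification, including the derivative of $\bm{\Phi}$ with respect to its second argument, is where I expect most of the care to be needed. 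It also justifies why $\bm{\Phi}^\top(t_1,t_1-\sigma)$ appears in \eqref{eq: Gamma of h definition}.

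The transition matrix of the dual system is $\bm{\Theta}(\tau)\bm{\Theta}^{-1}(\sigma)=\bm{\Phi}^\top(t_1,t_1-\tau)\,\bm{\Phi}^{-\top}(t_1,t_1-\sigma)=\bm{\Phi}^\top(t_1,t_1-\tau)\,\bm{\Phi}^\top(t_1-\sigma,t_1)$. Variation of constants with $\mathbf{g}(0)=\mathbf{I}$ then yields
\begin{equation*}
\mathbf{g}(\tau)=\bm{\Theta}(\tau)\Bigl(\mathbf{I}+\int_0^\tau\bm{\Theta}^{-1}(\sigma)\bm{\uplambda}(\sigma)\dd\sigma\Bigr).
\end{equation*}
Using the semigroup property, I would rewrite $\bm{\Theta}^{-1}(\sigma)$ so that the integral matches $\bm{\Lambda}$ in \eqref{eq: Gamma of h definition}. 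If the definition is meant with the full propagator, I would check that $\bm{\Theta}(\tau)\bm{\Theta}^{-1}(\sigma)$ is the intended kernel. Otherwise I would factor $\bm{\Theta}(T)=\bm{\Phi}^\top(t_1,t_0)$ out of the bracket, keeping track of which factor multiplies the integral. This bookkeeping is the main obstacle: the result should read $\mathbf{g}(T)=\bm{\Phi}^\top(t_1,t_0)(\mathbf{I}+\bm{\Lambda}(T))$ with $\bm{\Lambda}$ as defined, possibly after reinterpreting the integrand as the dual propagator evaluated at $\sigma$.

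With that identity in hand, the equivalence follows. Since $\bm{\Phi}(t_1,t_0)$ is nonsingular, $\mathbf{g}(T)=\bm{0}$ holds if and only if $\bm{\Lambda}(T)=-\mathbf{I}$. Then $\bm{\Psi}=\mathbf{g}^\top(T)=\bm{0}$, so $\mathbf{x}(t_1)=\bm{0}$ for every $\mathbf{x}(t_0)$, at the fixed time $T$ and independently of the initial state. Conversely, if $\mathbf{x}(t_1)=\bm{0}$ for every initial state, then $\bm{\Psi}=\bm{0}$, which forces $\bm{\Lambda}(T)=-\mathbf{I}$.
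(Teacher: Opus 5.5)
Your overall route is the paper's: solve the dual state equation by variation of constants, factor out an invertible matrix, and read off $\mathbf{g}(T)=\bm{0}$. Your fundamental matrix $\bm{\Theta}(\tau)=\bm{\Phi}^\top(t_1,t_1-\tau)$ is correct, and so is $\mathbf{g}(\tau)=\bm{\Theta}(\tau)\bigl(\mathbf{I}+\int_0^\tau\bm{\Theta}^{-1}(\sigma)\bm{\uplambda}(\sigma)\dd\sigma\bigr)$. The gap is the ``bookkeeping'' step you defer, and it cannot be done. Since $\bm{\Theta}^{-1}(\sigma)=\bm{\Phi}^\top(t_1-\sigma,t_1)$, what you have actually shown is
\begin{equation*}
\mathbf{g}(T)=\bm{\Phi}^\top(t_1,t_0)\Bigl(\mathbf{I}+\int_0^T\bm{\Phi}^\top(t_1-\sigma,t_1)\bm{\uplambda}(\sigma)\dd\sigma\Bigr).
\end{equation*}
Its integrand is the \emph{inverse} of the one in the definition of $\bm{\Lambda}$, which is $\bm{\Phi}^\top(t_1,t_1-\sigma)=\bm{\Theta}(\sigma)$. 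The full-propagator reading gives $\bm{\Theta}(T)\bm{\Theta}^{-1}(\sigma)=\bm{\Phi}^\top(t_1-\sigma,t_0)$, which does not match either. No semigroup manipulation or reinterpretation will produce $\mathbf{g}(T)=\bm{\Phi}^\top(t_1,t_0)(\mathbf{I}+\bm{\Lambda}(T))$, because the statement is false with $\bm{\Lambda}$ as written.

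Here is a counterexample. Take $n=m=1$, $\mathbf{A}=\mathbf{B}=1$, $T=\ln 2$, and $\bm{\uplambda}\equiv-1$; this is the state-feedback selection $\bm{\uplambda}=-\mathbf{K}^*\mathbf{v}$ with $\mathbf{K}=1$. Then $\mathbf{g}'=\mathbf{g}-1$ with $\mathbf{g}(0)=1$ gives $\mathbf{g}\equiv 1$. The law $\mathrm{u}=-\mathrm{x}$ solves $\langle\mathbf{v},\mathbf{u}\rangle=\langle\bm{\uplambda},\mathbf{x}\rangle$ and yields $\dot{\mathrm{x}}=0$, so $\mathrm{x}(t_1)=\mathrm{x}(t_0)$. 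Yet $\bm{\Lambda}(T)=-\int_0^{\ln 2}e^{\sigma}\dd\sigma=-1$. The corrected quantity is $-\int_0^{\ln 2}e^{-\sigma}\dd\sigma=-\tfrac12\neq-1$, consistent with $\mathbf{g}(T)\neq 0$.

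The paper's own proof reaches the stated $\bm{\Lambda}$ only because it takes $\bm{\Phi}^\top(t_1-\tau,t_1-\sigma)$ as the transition matrix of $\mathbf{g}'=\mathbf{A}^*\mathbf{g}$. That matrix is the inverse of the true dual propagator $\bm{\Phi}^\top(t_1-\sigma,t_1-\tau)$. It is the same kind of candidate as the $\bm{\Phi}^\top(t_1-\tau,t_1)$ you correctly rejected at the outset; in the LTI case it solves $\mathbf{g}'=-\mathbf{A}^\top\mathbf{g}$. The paper's subsequent factorization also transposes a product without reversing the order, which matters in the LTV case. Carried through honestly, your argument proves the corrected proposition: a control law from the integral equation drives every initial state to the origin at time $T$ if and only if $\int_0^T\bm{\Phi}^\top(t_1-\sigma,t_1)\bm{\uplambda}(\sigma)\dd\sigma=-\mathbf{I}$. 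Your final step should state this condition rather than asserting agreement with $\bm{\Lambda}$ as defined.
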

\begin{pf}
    Start from Proposition \ref{prop: input modulation equal state modulation} and consider the solution of auxiliary system \eqref{eq: aux system state eq} with initial condition $\mathbf{g}(0)=\mathbf{I}$, i.e.
    \begin{equation}\label{eq: solution of LTV auxiliary system with h intermediate step}
        \mathbf{g}(\tau)=\bm{\Phi}^\top(t_1-\tau, t_1) + \int_0^\tau \bm{\Phi}^\top (t_1-\tau, t_1- \sigma) \bm{\uplambda}( \sigma) \dd  \sigma,
    \end{equation}
    where it was used that the state transition matrix associated with $\mathbf{A}^*(\tau)$ is given as $\bm{\Phi}^* (\tau,  \sigma)=\bm{\Phi}^\top (t_1-\tau, t_1- \sigma)$, for $\bm{\Phi} (\tau,  \sigma)$ being the state transition matrix associated with $\mathbf{A}$ \citep[p.~236]{Callier_linear_system_theory}. \\
    Then, define $\bm{ \Lambda}(\tau)$ as \eqref{eq: Gamma of h definition} and note that \eqref{eq: solution of LTV auxiliary system with h intermediate step} can be rewritten as
    \begin{equation} \label{eq: solution of LTV auxiliary system with h using Gamma}
        \mathbf{g}(\tau)=\bm{\Phi}^\top(t_1-\tau, t_1) + \bm{\Phi}^\top(t_1-\tau, t_1) \bm{ \Lambda}(\tau).
    \end{equation}
    Next, to satisfy $\bm{\Psi}(t_1, t_0)=(\mathbf{g}^{\top}(0))^{-1} \mathbf{g}^\top(T)= \bm{0}$, it must follow that $\mathbf{g}(T)=\bm{0}$ and, consequently,
    \begin{equation}
        \bm{\Phi}^\top(t_1-\tau, t_1) \bm{ \Lambda}(T)=-\bm{\Phi}^\top(t_1-\tau, t_1).
    \end{equation}
    It is clearly sufficient that $\bm{ \Lambda}(T)=-\mathbf{I}$. For the necessity, simply note that the state transition matrix is always invertible for any finite time and thus it cannot be true that $\bm{\Phi}^\top(t_0, t_1)=\bm{0}$ for any $T=t_1-t_0<\infty$. \hfill $\blacksquare$
\end{pf}

Proposition \ref{prop: Gamma(T)=-I} provides an interesting connection: while the convergence on estimation problems is related to the value of the modulating function and its derivatives, convergence in the control problem is related to the integral of the dual modulation. Moreover, a crucial consequence of this result is that, if $\bm{ \Lambda}(T)=-\mathbf{I}$, then a fixed-time stabilizing controller may be constructed using a moving-window scheme, as later discussed.

\begin{exmp}\label{example: dirac delta without controllability bracket}
    In both theoretical and practical settings, it is crucial to be able to easily evaluate, given a particular selection of $\bm{\uplambda}$, if \eqref{eq: Gamma(T)=-I} and the range condition provided in Proposition \ref{prop: trajectory constraint} are satisfied. As a simple example, consider the dual modulation $\bm{\uplambda}(\tau)=-\bm{\Phi}^\top(t_1-\tau, t_1)\delta(\tau-t_1)$, for $\delta(\tau)$ being the Dirac delta. Then, it trivially follows that
    \begin{equation}
        \bm{ \Lambda}(\tau)=\left\{ \begin{array}{ccc}
             \bm{0} &, & \tau \in [0, T) \\
             -\bm{I} &, & \tau=T
        \end{array} \right.
    \end{equation}
    and, consequently, \eqref{eq: Gamma(T)=-I} is satisfied. To check for the existence of a control law given by \eqref{eq: input from modulation for natural trajectories}, note that
    \begin{equation}
        \mathsf{H}(\mathbf{x})=-\langle \bm{\Phi}^\top(t_1-\tau, t_1)\delta(\tau-t_1), \mathbf{x} \rangle
    \end{equation}
    or, equivalently, \citep[p.~236]{Callier_linear_system_theory}
    \begin{equation}
        \mathsf{H}(\mathbf{x})=-\langle \delta(\tau-t_1), \bm{\Phi}(\tau, t_0) \mathbf{x} \rangle.
    \end{equation}
    Then, expand the definition of the modulation operator
    \begin{equation}
        \mathsf{H}(\mathbf{x})=-\int_{t_0}^{t_1} \delta(-\tau) \bm{\Phi}(\tau, t_0) \mathbf{x}(\tau) \dd \tau
    \end{equation}
    and consider two separate cases: if $0 \in [t_0, t_1]$ and if $0 \notin [t_0, t_1]$. If $0 \notin [t_0, t_1]$, then $\mathsf{H}(\mathbf{x})=\bm{0}$, which is trivially within the range of $\mathsf{R}(\mathbf{u})$. In the second case, i.e. $0 \in [t_0, t_1]$, it follows that $\mathsf{H}(\mathbf{x})=\bm{\Phi}(0, t_0) \mathbf{x}(0)$, and it is necessary to compute $\mathbf{g}(\tau)$, $\mathbf{v}(\tau)$, and the range of $\mathsf{R}(\mathbf{u})$ to obtain a conclusion. \hfill $\diamond$
\end{exmp}

Note that evaluating the existence of inputs given by \eqref{eq: input from modulation for natural trajectories} by explicitly computing the range is not necessarily a simple task, even for simple dual modulations. Thus, it is of theoretical and practical importance to have simpler methods to evaluate if, given a selection of $\bm{\uplambda}$, there exists a control law that satisfies \eqref{eq: input from modulation for natural trajectories}. For this reason, two different tests are now proposed.

\subsection{Self-Modulation and Controllability Bracket Tests}

A common way to check for controllability in LTV systems is the reachability gramian. In the MF-based control context, one operator that will prove useful and is later shown to be deeply related to the reachability gramian is the self-modulation operator, which is now introduced.

\begin{defn}\label{def: self modulation}
    Given modulation operator \eqref{eq: modulation definition}, a self-modulation operator is defined as
    \begin{equation}\label{eq: self modulation def}
        \mathsf{S}(\mathbf{g}):=\int_{0}^{T} \mathbf{g}^\top(\tau) \mathbf{g}(\tau) \dd \tau = \langle\mathbf{g}, \mathbf{g}(t_1-\tau)\rangle\text{.}
    \end{equation}
\end{defn}

Note that the self-modulation operator $\langle\mathbf{g}, \mathbf{g}(t_1-\tau)\rangle$ leads to, in fact, a gram matrix. Consequently, the self-modulation operator takes in a modulation kernel and returns a linear operator, from which the range of the map $\mathsf{R}(\mathbf{u})$ can be checked. This provides a simple way to evaluate existence of solutions, as formalized in the following proposition.

\begin{prop}\label{proposition: reachability gramian}
    Consider modulation operator \eqref{eq: modulation definition}, a system described by \eqref{eq: state eq}, self-modulation operator \eqref{eq: self modulation def}, and the linear map $\mathsf{R}{(\mathbf{u})}$ defined in \eqref{eq: H and R map definition}. Then, if the self-modulation
    \begin{equation}
        \mathsf{S}(\mathbf{v})=\langle \mathbf{v}, \mathbf{v}(t_1-\tau) \rangle,
    \end{equation}
    has full rank, then there exists an input $\mathbf{u}(t)$ given by \eqref{eq: input from modulation for natural trajectories}.
\end{prop}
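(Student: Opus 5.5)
The approach is to show that full rank of $\mathsf{S}(\mathbf{v})$ makes $\mathsf{R}$ surjective onto $\mathbb{R}^{n\times\cdot}$. Once $\mathsf{R}$ is surjective, $\text{range}(\mathsf{H}) \subseteq \text{range}(\mathsf{R})$ holds trivially, and Proposition \ref{prop: trajectory constraint} gives existence of an input satisfying \eqref{eq: input from modulation for natural trajectories}. Rather than working abstractly with ranges, I would build the input explicitly, in the spirit of the reachability gramian construction. Here $\mathsf{H}(\mathbf{x})=\langle\bm{\uplambda},\mathbf{x}\rangle$ is a fixed vector in $\mathbb{R}^n$ for any given trajectory $\mathbf{x}$.

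First I would write out the map $\mathsf{R}$ using Definition \ref{def: modulation operator}:
\begin{equation*}
\mathsf{R}(\mathbf{u})=\int_{t_0}^{t_1}\mathbf{v}^\top(t_1-\tau)\mathbf{u}(\tau)\dd\tau .
\end{equation*}
Then, given an arbitrary target $\mathbf{h}:=\mathsf{H}(\mathbf{x})\in\mathbb{R}^n$, I would propose the candidate input
\begin{equation*}
\mathbf{u}(\tau)=\mathbf{v}(t_1-\tau)\,\bm{\xi}, \qquad \bm{\xi}\in\mathbb{R}^n .
\end{equation*}
Substituting this candidate gives
\begin{equation*}
\mathsf{R}(\mathbf{u})=\int_{t_0}^{t_1}\mathbf{v}^\top(t_1-\tau)\mathbf{v}(t_1-\tau)\dd\tau\,\bm{\xi}.
\end{equation*}
The change of variables $\sigma=t_1-\tau$ turns this into $\int_0^T\mathbf{v}^\top(\sigma)\mathbf{v}(\sigma)\dd\sigma\,\bm{\xi}=\mathsf{S}(\mathbf{v})\bm{\xi}$, by Definition \ref{def: self modulation}.

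Since $\mathsf{S}(\mathbf{v})\in\mathbb{R}^{n\times n}$ has full rank by assumption, I choose $\bm{\xi}=\mathsf{S}(\mathbf{v})^{-1}\mathbf{h}$. This yields $\mathsf{R}(\mathbf{u})=\mathbf{h}=\mathsf{H}(\mathbf{x})$, which is exactly \eqref{eq: input from modulation for natural trajectories}. As $\mathbf{h}$ was arbitrary, the range of $\mathsf{R}$ is all of $\mathbb{R}^n$, so the range condition of Proposition \ref{prop: trajectory constraint} holds. The input is $\mathbf{u}(\tau)=\mathbf{v}(t_1-\tau)\mathsf{S}(\mathbf{v})^{-1}\langle\bm{\uplambda},\mathbf{x}\rangle$, mirroring the minimum-norm input obtained from the reachability gramian. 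If $\mathsf{H}$ is regarded as matrix-valued (a modulation kernel acting columnwise), I would apply the same argument column by column.

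The main obstacle is justifying that the constructed $\mathbf{u}$ is admissible and well defined. I would check that $\mathbf{v}=\mathbf{B}^*\mathbf{g}$ is square integrable on $[0,T]$; this follows from the continuity and square integrability of $\mathbf{A}$ and $\mathbf{B}$, which make $\mathbf{g}$ continuous whenever $\bm{\uplambda}$ is integrable. That gives $\mathsf{S}(\mathbf{v})$ finite and $\mathbf{u}$ locally square integrable. A further subtlety is that $\mathbf{h}$ depends on $\mathbf{x}$ over the whole window, and $\mathbf{x}$ itself depends on $\mathbf{u}$, so the equation is implicit. I would handle this as Proposition \ref{prop: trajectory constraint} does, which treats solvability for any given $\mathbf{x}$ as the requirement, and note that the statement only asserts existence in that sense.
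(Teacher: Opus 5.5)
Your proposal is correct and follows essentially the paper's route: the paper shows $\text{range}(\mathsf{R})=\mathbb{R}^n$ using $\text{range}(\mathsf{R})=\text{range}(\mathsf{R}\mathsf{R}^*)$ and the computation $\mathsf{R}\mathsf{R}^*=\mathsf{S}(\mathbf{v})$. Your candidate $\mathbf{u}(\tau)=\mathbf{v}(t_1-\tau)\bm{\xi}$ is exactly $\mathsf{R}^*\bm{\xi}$, so you perform the same computation constructively and need only the trivial inclusion $\text{range}(\mathsf{R}\mathsf{R}^*)\subseteq\text{range}(\mathsf{R})$.
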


\begin{pf}
    Starting from Proposition \ref{prop: trajectory constraint}, it is necessary and sufficient that $\text{range}(\mathsf{H}) \subseteq \text{range}(\mathsf{R})$, which is always true if $\text{range}(\mathsf{R})=\mathbb{R}^{n}$.
    
    Using the fact that $\text{range}(\mathsf{R})=\text{range}(\mathsf{R}\mathsf{R}^*)$ and computing $\mathsf{R}\mathsf{R}^*$ gives the $n \times n$ symmetric matrix
    \begin{equation}
        \mathsf{R}\mathsf{R}^*=\int_{t_0}^{t_1} \mathbf{v}^\top(t_1-\tau) \mathbf{v}(t_1-\tau) \dd \tau\text{,}
    \end{equation}
    which can be rewritten as
    \begin{equation}
        \mathsf{R}\mathsf{R}^*=\int_{0}^{T} \mathbf{v}^\top(\tau) \mathbf{v}(\tau) \dd \tau = \mathsf{S}(\mathbf{v}) \text{.}
    \end{equation}
    Thus, if $\rank ~\mathsf{S}(\mathbf{v})=n$, $\text{range}(\mathsf{R})=\mathbb{R}^{n}$. \hfill $\blacksquare$
\end{pf}

\begin{exmp}\label{example: reachability gramian}
    Consider the case $\bm{\uplambda}=\bm{0}$ for the auxiliary system \eqref{eq: aux system state eq}-\eqref{eq: aux system out eq}. Then, the modulation kernel $\mathbf{g}(\tau)$ leads to the state-transition matrix, as previously discussed, and by selecting $\mathbf{g}(0)=\mathbf{I}$ one obtains
    \begin{equation}
        \mathbf{v}^\top(t_1-\tau)= \bm{\Phi}(t_1, \tau) \mathbf{B}(\tau)
    \end{equation}
    from \eqref{eq: aux system state eq}-\eqref{eq: aux system out eq} and, subsequently,
    \begin{equation}\label{eq: traditional reachability gramian}
        \mathsf{S}(\mathbf{v})=\int_{t_0}^{t_1} \bm{\Phi}(t_1, \tau) \mathbf{B}(\tau) \mathbf{B}^\top(\tau) \bm{\Phi}^\top (t_1, \tau) \dd \tau \text{,}
    \end{equation}
    which is the well-known reachability gramian $\mathbf{W}_{R}(t_1, t_0)$ \citep[p.~176]{ChenLinearSystems}, a structure that has already been discussed using the range of linear maps in some of the literature \citep[p.~226]{Callier_linear_system_theory}. Thus, the reachability gramian can be seen as a particular case of the self-modulation $\mathsf{S}(\mathbf{v})$ for the dual modulation selection given by $\bm{\uplambda}(\tau)=\bm{0}$. \hfill $\diamond$
\end{exmp}

Example \ref{example: reachability gramian} shows a clear interpretation of the self-modulation of $\mathbf{v}$: it is a generalization of the reachability gramian, and can be directly understood as representing the range of the generalized reachability map $\mathsf{R}(\mathbf{u})$ defined in \eqref{eq: H and R map definition}.

Moreover, it is possible to extend the discussion in Example \ref{example: reachability gramian} to arbitrary dual modulations $\bm{\uplambda}$ through the self-modulation $\mathsf{S}(\mathbf{v})$ and auxiliary system \eqref{eq: aux system state eq}-\eqref{eq: aux system out eq}. Since it is well-known that controllability is a necessary and sufficient condition for the inputs to be able to interact and regulate all states \citep[pp.~176-177]{ChenLinearSystems}, the dual modulation $\bm{\uplambda}$ must, in some sense, preserve controllability, which can be evaluated using the proposition below.

\begin{prop}\label{prop: gamma feasible trajectory constraint}
    Consider modulation operator \eqref{eq: modulation definition}, a controllable system given by \eqref{eq: state eq}, a dual modulation \eqref{eq: dual modulation definition}, let $\overline{\mathbf{W}}_R(\tau)\in \mathbb{R}^{n \times n}$ be given as
    \begin{equation}\label{eq: W_R bar}
        \overline{\mathbf{W}}_R(\tau):= \bm{\Phi}(t_1, \tau) \mathbf{B}(\tau) \mathbf{B}^\top(\tau) \bm{\Phi}^\top (t_1, \tau),
    \end{equation}
    and let $\bm{ \Lambda}(\tau) \in \mathbb{R}^{n \times n}$ be given as \eqref{eq: Gamma of h definition}. Then, if the integral equation
    \begin{equation}\label{eq: feasible trajectories Gamma constraint}
         \langle \bm{ \Lambda}, \overline{\mathbf{W}}_R \rangle + \langle \bm{ \Lambda},  \overline{\mathbf{W}}_R \rangle ^\top+ \langle \bm{ \Lambda}, \overline{\mathbf{W}}_R \bm{ \Lambda}(t_1-\tau) \rangle =\bm{0},
    \end{equation}
    is satisfied, then there exists an input $\mathbf{u}(t)$ given by \eqref{eq: input from modulation for natural trajectories}.
\end{prop}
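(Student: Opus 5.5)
The plan is to reduce the statement to Proposition~\ref{proposition: reachability gramian}. I will show that the self-modulation $\mathsf{S}(\mathbf{v})$ of the auxiliary output equals the reachability gramian $\mathbf{W}_R(t_1,t_0)$ plus exactly the left-hand side of \eqref{eq: feasible trajectories Gamma constraint}. When \eqref{eq: feasible trajectories Gamma constraint} holds, $\mathsf{S}(\mathbf{v})=\mathbf{W}_R(t_1,t_0)$. Controllability of \eqref{eq: state eq} makes this gramian full rank, so Proposition~\ref{proposition: reachability gramian} yields an input given by \eqref{eq: input from modulation for natural trajectories}.

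The first step is to write the modulation kernel explicitly. As in the proof of Proposition~\ref{prop: Gamma(T)=-I}, with $\mathbf{g}(0)=\mathbf{I}$, equation \eqref{eq: solution of LTV auxiliary system with h using Gamma} gives
\begin{equation*}
\mathbf{g}(\tau)=\bm{\Phi}^\top(t_1-\tau,t_1)\bigl(\mathbf{I}+\bm{\Lambda}(\tau)\bigr).
\end{equation*}
Hence, by \eqref{eq: aux system out eq}, $\mathbf{v}(\tau)=\mathbf{B}^\top(t_1-\tau)\,\mathbf{g}(\tau)$. The factor $\mathbf{B}^\top(t_1-\tau)\bm{\Phi}^\top(t_1-\tau,t_1)$ is precisely what Example~\ref{example: reachability gramian} identifies with the $\bm{\uplambda}=\bm{0}$ kernel, namely $\mathbf{v}_0^\top(t_1-\tau)=\bm{\Phi}(t_1,\tau)\mathbf{B}(\tau)$. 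Therefore
\begin{equation*}
\mathbf{v}^\top(\tau)\mathbf{v}(\tau)=\bigl(\mathbf{I}+\bm{\Lambda}(\tau)\bigr)^\top\,\overline{\mathbf{W}}_R(t_1-\tau)\,\bigl(\mathbf{I}+\bm{\Lambda}(\tau)\bigr).
\end{equation*}

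The second step is to expand this product and integrate over $[0,T]$, which produces four terms:
\begin{itemize}
\item $\int_0^T\overline{\mathbf{W}}_R(t_1-\tau)\dd\tau$, which after the change of variable $\tau\mapsto t_1-\tau$ is $\mathbf{W}_R(t_1,t_0)$ in \eqref{eq: traditional reachability gramian};
\item $\int_0^T\bm{\Lambda}^\top(\tau)\overline{\mathbf{W}}_R(t_1-\tau)\dd\tau$, which by Definition~\ref{def: modulation operator} is $\langle\bm{\Lambda},\overline{\mathbf{W}}_R\rangle$;
\item the transpose of the previous term, using that $\overline{\mathbf{W}}_R$ is symmetric;
\item the quadratic term $\int_0^T\bm{\Lambda}^\top(\tau)\overline{\mathbf{W}}_R(t_1-\tau)\bm{\Lambda}(\tau)\dd\tau$, which after the same substitution is $\langle\bm{\Lambda},\overline{\mathbf{W}}_R\bm{\Lambda}(t_1-\tau)\rangle$.
\end{itemize}
This gives $\mathsf{S}(\mathbf{v})=\mathbf{W}_R(t_1,t_0)+\bigl[\text{LHS of }\eqref{eq: feasible trajectories Gamma constraint}\bigr]$.

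The final step is direct. Under \eqref{eq: feasible trajectories Gamma constraint}, $\mathsf{S}(\mathbf{v})=\mathbf{W}_R(t_1,t_0)$, and this gramian is nonsingular for a controllable system (Chen's gramian criterion). Proposition~\ref{proposition: reachability gramian} then concludes the proof.

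The main obstacle I expect is bookkeeping of the time-reversal and transpose conventions. The paper mixes $\bm{\Phi}^\top(t_1-\tau,t_1)$ with $\bm{\Phi}(t_1,\tau)$, and each term must land in the convolution form $\int_{t_0}^{t_1}(\cdot)^\top(t_1-\tau)(\cdot)(\tau)\dd\tau$ of the modulation operator. I would carry out every substitution $s=t_1-\tau$ explicitly. For the identification of the adjoint transition matrix, I would rely on Example~\ref{example: reachability gramian} rather than re-deriving it, so that the first term matches \eqref{eq: traditional reachability gramian} exactly.
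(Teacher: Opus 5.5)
Your argument is the same as the paper's. You write $\mathsf{S}(\mathbf{v})$ as the reachability gramian plus the left-hand side of \eqref{eq: feasible trajectories Gamma constraint}, use controllability to make the gramian nonsingular, and invoke Proposition~\ref{proposition: reachability gramian}. Your four-term expansion and the substitutions $s=t_1-\tau$ are correct. The step that fails as written is the identification of the factor in front of $\mathbf{I}+\bm{\Lambda}$. If $\mathbf{g}(\tau)=\bm{\Phi}^\top(t_1-\tau,t_1)(\mathbf{I}+\bm{\Lambda}(\tau))$, then the $\bm{\uplambda}=\bm{0}$ output satisfies $\mathbf{v}_0^\top(t_1-\tau)=\bm{\Phi}(\tau,t_1)\mathbf{B}(\tau)=\bm{\Phi}^{-1}(t_1,\tau)\mathbf{B}(\tau)$, not $\bm{\Phi}(t_1,\tau)\mathbf{B}(\tau)$. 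Consequently $\mathbf{v}^\top\mathbf{v}$ does not produce $\overline{\mathbf{W}}_R(t_1-\tau)$. The two results you import, \eqref{eq: solution of LTV auxiliary system with h using Gamma} and Example~\ref{example: reachability gramian}, contradict each other in standard notation. Already for constant $\mathbf{A}$, the kernel solving $\mathbf{g}'=\mathbf{A}^\top\mathbf{g}$, $\mathbf{g}(0)=\mathbf{I}$, is $e^{\mathbf{A}^\top\tau}=\bm{\Phi}^\top(t_1,t_1-\tau)$, whereas $\bm{\Phi}^\top(t_1-\tau,t_1)=e^{-\mathbf{A}^\top\tau}$. Choosing to ``rely on the Example'' hides this conflict rather than resolving it, even though you correctly sensed that it was the main obstacle.

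The repair is to redo the variation of constants. The transition matrix of $\mathbf{g}'=\mathbf{A}^*\mathbf{g}$ is $\bm{\Phi}^\top(t_1-\sigma,t_1-\tau)$; differentiating in $\tau$ gives $\mathbf{A}^\top(t_1-\tau)$ times itself, and it equals $\mathbf{I}$ at $\tau=\sigma$. Hence
\begin{equation*}
\mathbf{g}(\tau)=\bm{\Phi}^\top(t_1,t_1-\tau)\Bigl(\mathbf{I}+\int_0^\tau\bm{\Phi}^\top(t_1-\sigma,t_1)\bm{\uplambda}(\sigma)\dd\sigma\Bigr).
\end{equation*}
Take $\bm{\Lambda}$ to be this inner integral. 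Then $\mathbf{v}^\top\mathbf{v}=(\mathbf{I}+\bm{\Lambda})^\top\overline{\mathbf{W}}_R(t_1-\tau)(\mathbf{I}+\bm{\Lambda})$ holds exactly, and the rest of your argument goes through unchanged. This $\bm{\Lambda}$ is \eqref{eq: Gamma of h definition} with the arguments of $\bm{\Phi}$ swapped. Strictly, then, you are proving the proposition for this corrected $\bm{\Lambda}$. The paper's proof substitutes the same formula \eqref{eq: solution of LTV auxiliary system with h using Gamma} and has the same slip. Also state that controllability must hold on $[t_0,t_1]$ itself; that is what makes $\mathbf{W}_R(t_1,t_0)$ nonsingular for an LTV system.
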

\begin{pf}
    Take $\mathbf{v}(\tau)$ given by \eqref{eq: aux system out eq} and compute its self-modulation to obtain
    \begin{equation}
            \mathsf{S}(\mathbf{v})=\langle \mathbf{B}^* \mathbf{g}, \mathbf{B}^*(t_1-\tau) \mathbf{g}(t_1-\tau) \rangle,
    \end{equation}
    which, by substituting \eqref{eq: solution of LTV auxiliary system with h using Gamma}, defining $\overline{\mathbf{W}}_R(\tau) \in \mathbb{R}^{n \times n}$ as \eqref{eq: W_R bar}, and reorganizing leads to
    \begin{equation}
    \begin{aligned}
        \mathsf{S}(\mathbf{v})=& \langle \bm{ \Lambda}, \overline{\mathbf{W}}_R \rangle + \langle \bm{ \Lambda},  \overline{\mathbf{W}}_R \rangle ^\top\\&+ \langle \bm{ \Lambda}, \overline{\mathbf{W}}_R \bm{ \Lambda}(t_1-\tau) \rangle+\int_{t_0}^{t_1}\overline{\mathbf{W}}_R(\tau) \dd \tau.
    \end{aligned}
    \end{equation}
    Lastly, for any controllable system, it follows that
    \begin{equation}
        \rank~ \left( \int_{t_0}^{t_1} \overline{ \mathbf{W} }_R(\tau) \dd \tau \right)=n.
    \end{equation}
    Thus, due to Proposition \ref{proposition: reachability gramian}, it suffices that \eqref{eq: feasible trajectories Gamma constraint} is satisfied for an input $\mathbf{u}(t)$ given by \eqref{eq: input from modulation for natural trajectories} to exist. \hfill $\blacksquare$
\end{pf}

This motivates the definition of a new operator to describe \eqref{eq: feasible trajectories Gamma constraint}, here denoted as the controllability bracket.

\begin{defn}
    Consider the modulation operator \eqref{eq: modulation definition}, a dual modulation \eqref{eq: dual modulation definition}, $\bm{ \Lambda}(\tau)$ as in \eqref{eq: Gamma of h definition}, and let $\overline{\mathbf{W}}_R$ be as in \eqref{eq: W_R bar}. Then, the controllability bracket is defined as
    \begin{equation}\label{eq: controllability bracket definition}
        \llangle \bm{ \Lambda}, \overline{\mathbf{W}}_R \rrangle:=\langle \bm{ \Lambda}, \overline{\mathbf{W}}_R \rangle + \langle \bm{ \Lambda},  \overline{\mathbf{W}}_R \rangle ^\top+ \langle \bm{ \Lambda}, \overline{\mathbf{W}}_R \bm{ \Lambda}(t_1-\tau) \rangle.
    \end{equation}
\end{defn}

In doing so, it is possible to rephrase Proposition \ref{prop: gamma feasible trajectory constraint} using the controllability bracket operator.

\begin{cor}
    Given a controllable system, if a dual modulation selection $\bm{\uplambda}(\tau)$ and the associated integral $\bm{ \Lambda}(\tau)$ satisfy
    \begin{equation}\label{eq: controllability bracket = 0}
        \llangle \bm{ \Lambda}, \overline{\mathbf{W}}_R \rrangle= \bm{0},
    \end{equation}
    then there exists $\mathbf{u}(t)$ given by \eqref{eq: input from modulation for natural trajectories}.
\end{cor}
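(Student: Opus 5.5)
This corollary is a direct restatement of Proposition \ref{prop: gamma feasible trajectory constraint}, so the plan is to reduce it to that result through the definition of the controllability bracket. Substituting \eqref{eq: controllability bracket definition} into \eqref{eq: controllability bracket = 0} gives exactly the integral equation \eqref{eq: feasible trajectories Gamma constraint}. The hypotheses coincide: the system is controllable, $\bm{\uplambda}$ is a dual modulation with auxiliary kernel $\mathbf{g}$ (normalised so that $\mathbf{g}(0)=\mathbf{I}$), and $\bm{\Lambda}$ is built from $\bm{\uplambda}$ via \eqref{eq: Gamma of h definition}. Hence Proposition \ref{prop: gamma feasible trajectory constraint} applies and yields an input satisfying \eqref{eq: input from modulation for natural trajectories}.

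To make the argument self-contained, I would also retrace the chain behind Proposition \ref{prop: gamma feasible trajectory constraint}. First, write $\mathbf{g}$ in the form \eqref{eq: solution of LTV auxiliary system with h using Gamma}, so that $\mathbf{v}=\mathbf{B}^*\mathbf{g}$ equals $\mathbf{B}^*\bm{\Phi}^\top(t_1-\tau,t_1)(\mathbf{I}+\bm{\Lambda}(\tau))$. Next, expand the self-modulation $\mathsf{S}(\mathbf{v})$. Its four terms are:
\begin{itemize}
\item the reachability gramian $\int_{t_0}^{t_1}\overline{\mathbf{W}}_R\,\dd\tau$;
\item the two cross terms $\langle\bm{\Lambda},\overline{\mathbf{W}}_R\rangle$ and its transpose;
\item the quadratic term $\langle\bm{\Lambda},\overline{\mathbf{W}}_R\bm{\Lambda}(t_1-\tau)\rangle$.
\end{itemize}
The last three together form $\llangle\bm{\Lambda},\overline{\mathbf{W}}_R\rrangle$. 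This gives the identity $\mathsf{S}(\mathbf{v})=\llangle\bm{\Lambda},\overline{\mathbf{W}}_R\rrangle+\mathbf{W}_R(t_1,t_0)$.

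Under \eqref{eq: controllability bracket = 0} this identity reduces to $\mathsf{S}(\mathbf{v})=\mathbf{W}_R(t_1,t_0)$. Controllability makes this matrix nonsingular, as Example \ref{example: reachability gramian} identifies it with the reachability gramian. Proposition \ref{proposition: reachability gramian} then gives $\operatorname{range}(\mathsf{R})=\mathbb{R}^n\supseteq\operatorname{range}(\mathsf{H})$. By Proposition \ref{prop: trajectory constraint}, an input solving $\langle\mathbf{v},\mathbf{u}\rangle=\langle\bm{\uplambda},\mathbf{x}\rangle$ therefore exists.

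The only step needing care is the bookkeeping in the expansion of $\mathsf{S}(\mathbf{v})$. The reflected arguments $\tau\mapsto t_1-\tau$ in the modulation operator must line up with the definition of $\overline{\mathbf{W}}_R(\tau)$ and with the convention $\bm{\Phi}^*(\tau,\sigma)=\bm{\Phi}^\top(t_1-\tau,t_1-\sigma)$, so that the cross terms come out exactly as $\langle\bm{\Lambda},\overline{\mathbf{W}}_R\rangle$ and its transpose. Since Proposition \ref{prop: gamma feasible trajectory constraint} already settles this, the corollary itself needs no computation beyond the definitional substitution.
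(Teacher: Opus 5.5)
Your proposal is correct and is essentially the paper's argument. The paper states the corollary simply as a rephrasing, via the definition of the controllability bracket, of the proposition with the integral-equation condition, and your retraced chain (splitting $\mathsf{S}(\mathbf{v})$ into $\llangle \bm{\Lambda}, \overline{\mathbf{W}}_R \rrangle$ plus $\int_{t_0}^{t_1}\overline{\mathbf{W}}_R\,\mathrm{d}\tau$, then invoking controllability, the self-modulation rank test and the range condition) is exactly that proposition's proof.

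On the bookkeeping you flag: the constant term is the reachability gramian only if the homogeneous part of $\mathbf{g}$ is $\bm{\Phi}^\top(t_1,t_1-\tau)$, with $\bm{\Lambda}$ defined consistently; this is the ordering implicitly used in the reachability-gramian example. The ordering $\bm{\Phi}^\top(t_1-\tau,t_1)$ you copied from the paper does not solve $\mathbf{g}'=\mathbf{A}^*\mathbf{g}$, but this is an index slip shared with the paper and does not affect the reduction.
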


\begin{exmp}\label{example: controllability bracket natural trajectories}
    Consider the selection given by $\bm{\uplambda}=\bm{0}$. Then, it trivially follows that $\bm{ \Lambda}(\tau)=\bm{0}$ and, consequently,
    \begin{equation}
        \llangle \bm{0}, \overline{\mathbf{W}}_R \rrangle =\bm{0},
    \end{equation}
    showing that this selection trivially preserves the controllability of the system. In fact, any dual modulation that satisfies $\bm{ \Lambda}(\tau)=\bm{0}$ preserves controllability, but they do not drive the system to the origin in a finite time since they do not satisfy \eqref{eq: Gamma(T)=-I}. \hfill $\diamond$
\end{exmp}

\begin{exmp}\label{example: controllability bracket example 2}
    Consider again the dual modulation from Example \ref{example: dirac delta without controllability bracket}, i.e. $\bm{\uplambda}(\tau)=-\bm{\Phi}^\top(t_1-\tau, t_1)\delta(\tau-t_1)$ with
    \begin{equation}
        \bm{ \Lambda}(\tau)=\left\{ \begin{array}{ccc}
              \bm{0} &, & \tau \in [0, T) \\
             -\bm{I} &, & \tau=T
        \end{array} \right. .
    \end{equation}
    Then, it directly follows that
    \begin{equation}
        \llangle \bm{ \Lambda}, \overline{\mathbf{W}}_R \rrangle =\bm{0}
    \end{equation}
    and, consequently, there exists $\mathbf{u}(t)$ given by \eqref{eq: input from modulation for natural trajectories} if the system is controllable. The reader is asked to compare the presented analysis to Example \ref{example: dirac delta without controllability bracket}. \hfill $\diamond$ 
\end{exmp}

\subsection{From the MFM to Fixed-Time Control}\label{subsec: from the MFM to fixed-time control}

Instead of selecting the dual modulation $\bm{\uplambda}$ and obtaining $\mathbf{g}$ as the solution to auxiliary system \eqref{eq: aux system state eq}, it is possible to directly select $\mathbf{g}$ and obtain $\bm{\uplambda}$ as a solution to \eqref{eq: aux system state eq}: a concept that was discussed in, e.g., Example 5 of \cite{Acc_DT_MFM} and in \cite{IonesiRJ19}. In this way, by simply constructing $\mathbf{g}(\tau)$ using TMFs and unitary RMFs, the boundary conditions $\mathbf{g}(0)=\mathbf{I}$ and $\mathbf{g}(T)=\bm{0}$ can be trivially satisfied, having $\mathbf{v}(\tau)$ directly obtained from \eqref{eq: aux system out eq} and allowing for the direct evaluation of $\mathsf{S}(\mathbf{v})$.

\begin{cor}\label{corollary: kernel selection as TMFs and unitary RMFs}
    Consider modulation operator \eqref{eq: modulation definition}, a controllable system given by \eqref{eq: state eq}, $\mathbf{v}(\tau)$ given by \eqref{eq: aux system out eq}, and let $T \in \mathbb{R}_{>0}$. Then, a selection $\mathbf{g}(\tau)$ given by
    \begin{equation}
        \mathrm{g}_{ij}(\tau)=\left\{ \begin{array}{lcr}
            \text{a unitary RMF of order } \geq 1 &, & i= j  \\
            \text{a TMF of order } \geq 1 &, & i\neq j   
        \end{array} \right.
    \end{equation}
    that satisfies $\rank~\mathsf{S}(\mathbf{v})=n$ for
    \begin{equation}
        \mathsf{S}(\mathbf{v})=\int_{t_0}^{t_1} \mathbf{g}^\top(t_1-\tau) \mathbf{B}(\tau) \mathbf{B}^\top(\tau) \mathbf{g}(t_1-\tau) \dd \tau
    \end{equation}
    guarantees the existence of an input $\mathbf{u}(t)$ given by \eqref{eq: input from modulation for natural trajectories} that drives the system to the origin in a finite-time.
\end{cor}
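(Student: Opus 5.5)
The argument reverses the usual order of construction. Instead of picking $\bm{\uplambda}$ and solving the auxiliary system for $\mathbf{g}$, we take $\mathbf{g}$ as given in the statement and define the dual modulation by \eqref{eq: dual modulation definition}, i.e. $\bm{\uplambda}:=\mathbf{g}'-\mathbf{A}^*\mathbf{g}$. Two properties of this $\bm{\uplambda}$ are needed. First, since every entry of $\mathbf{g}$ is a modulating function of order $\geq 1$, $\mathbf{g}$ is $\mathcal{C}^0$ and differentiable on $[0,T]$. Because $\mathbf{A}$ is continuous and square integrable, $\bm{\uplambda}$ is then integrable, so it is an admissible design functional. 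Second, by construction the pair $(\mathbf{g},\bm{\uplambda})$ satisfies \eqref{eq: aux system state eq} identically. Proposition \ref{prop: input modulation equal state modulation} therefore applies with $\mathbf{v}=\mathbf{B}^*\mathbf{g}$, and gives relation \eqref{eq: input modulation as state modulation}.

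Next we read off the boundary values of $\mathbf{g}$.
\begin{itemize}
\item At $\tau=0$: a unitary RMF equals $1$ and a TMF vanishes, so $\mathbf{g}(0)=\mathbf{I}$. In particular $\rank~\mathbf{g}(0)=n$, which is the normalization assumed in Proposition \ref{prop: Gamma(T)=-I}.
\item At $\tau=T$: every RMF and every TMF vanishes, so $\mathbf{g}(T)=\bm{0}$.
\end{itemize}

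For existence of the input, we use the hypothesis $\rank~\mathsf{S}(\mathbf{v})=n$ together with Proposition \ref{proposition: reachability gramian}. This gives $\text{range}(\mathsf{R})=\mathbb{R}^n\supseteq\text{range}(\mathsf{H})$, so by Proposition \ref{prop: trajectory constraint} an input $\mathbf{u}(t)$ given by \eqref{eq: input from modulation for natural trajectories} exists. We should also check that the self-modulation in the statement agrees with Definition \ref{def: self modulation}. Substituting $\mathbf{v}^\top(t_1-\tau)=\mathbf{g}^\top(t_1-\tau)\mathbf{B}(\tau)$, which follows from \eqref{eq: aux system out eq} and the definition of $\mathbf{B}^*$, into $\int_0^T\mathbf{v}^\top\mathbf{v}\,\dd\tau$ and changing variables reproduces the displayed integral, up to the orientation convention for the transpose used in the paper.

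For finite-time convergence, we substitute $\langle\mathbf{v},\mathbf{u}\rangle=\langle\bm{\uplambda},\mathbf{x}\rangle$ into \eqref{eq: input modulation as state modulation}, as in the proof of Proposition \ref{prop: trajectory constraint}. This yields $\mathbf{g}^\top(0)\mathbf{x}(t_1)=\mathbf{g}^\top(T)\mathbf{x}(t_0)$, which becomes $\mathbf{x}(t_1)=\bm{0}$ with $T=t_1-t_0$. Hence $\bm{\Psi}(t_1,t_0)=\bm{0}$. Equivalently, we can invoke the necessity and sufficiency in Proposition \ref{prop: Gamma(T)=-I}: comparing $\mathbf{g}(T)=\bm{0}$ with \eqref{eq: solution of LTV auxiliary system with h using Gamma} and using invertibility of $\bm{\Phi}$ gives $\bm{\Lambda}(T)=-\mathbf{I}$.

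The main obstacle is the regularity of $\bm{\uplambda}$. Order $\geq1$ only guarantees $\mathcal{C}^0$, so differentiability of $\mathbf{g}$ and integrability of $\mathbf{g}'$ require care. I would handle this by reading the order as guaranteeing a piecewise-continuous, integrable derivative, which is all Lemma \ref{lemma: transfer of derivatives} requires.
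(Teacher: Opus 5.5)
Your proposal is correct and is essentially the paper's own one-line proof. You read off $\mathbf{g}(0)=\mathbf{I}$ and $\mathbf{g}(T)=\bm{0}$ from the definitions of unitary RMFs and TMFs, obtain existence of the input from $\mathsf{S}(\mathbf{v})$ having full rank via Proposition~\ref{proposition: reachability gramian}, and obtain finite-time convergence via Proposition~\ref{prop: Gamma(T)=-I} (or, equivalently, by your direct substitution into \eqref{eq: input modulation as state modulation}); your additions --- defining $\bm{\uplambda}:=\mathbf{g}'-\mathbf{A}^*\mathbf{g}$ so the auxiliary system holds by construction, checking that the displayed $\mathsf{S}(\mathbf{v})$ agrees with Definition~\ref{def: self modulation} (it matches exactly, so no transpose caveat is needed), and flagging the regularity of $\bm{\uplambda}$ for order-$1$ kernels --- are details the paper leaves implicit, not a different route.
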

\begin{pf}
    This directly follows from the definition of TMFs and unitary RMFs, together with Proposition \ref{prop: Gamma(T)=-I}, and Proposition \ref{proposition: reachability gramian}.
\end{pf}

\begin{exmp}\label{example: self-modulation with LTI B matrix}
    As a simple example on evaluating the existence of inputs given by \eqref{eq: input from modulation for natural trajectories} using Corollary \ref{corollary: kernel selection as TMFs and unitary RMFs}, consider a simple time-invariant $\mathbf{B}$ matrix given by
    \begin{equation}
        \mathbf{B}=\begin{bmatrix}
            \frac{1}{0.5} \\ 0
        \end{bmatrix}
    \end{equation}
    and compute $\mathsf{S}(\mathbf{v})$ to obtain
    \begin{equation}
        \mathsf{S}(\mathbf{v})= \int_{t_0}^{t_1}\begin{bmatrix}
            \frac{\mathrm{g}_{11}^2(t_1-\tau)}{0.25} & \frac{\mathrm{g}_{11}(t_1-\tau) \mathrm{g}_{12}(t_1-\tau)}{0.25} \\
            \frac{\mathrm{g}_{11}(t_1-\tau) \mathrm{g}_{12}(t_1-\tau)}{0.25} & \frac{\mathrm{g}_{12}^2(t_1-\tau)}{0.25} 
        \end{bmatrix} \dd \tau.
    \end{equation}
    Next, select
    \begin{gather}
        \mathrm{g}_{11}(\tau)=1-\tau/T \label{eq: RMF 11}, \\
        \mathrm{g}_{12}(\tau)=(e^{2\tau}-1)(e^{3\tau}-e^{3T}), \label{eq: TMF 12}
    \end{gather}
    and $T=1$. Finally, evaluate the integrals to obtain
    \begin{equation}
        \mathsf{S}(\mathbf{v})\approx\begin{bmatrix}
            1.33 &  & -32.35 \\
            -32.35 & & 2.33\cdot 10^3
        \end{bmatrix},
    \end{equation}
    clearly showing that $\rank~\mathsf{S}(\mathbf{v})=2$, meaning Corollary \ref{corollary: kernel selection as TMFs and unitary RMFs} is satisfied and there exists an input given by \eqref{eq: input from modulation for natural trajectories}. \hfill $\diamond$
\end{exmp}

Even though a rather large number of modulating functions may be necessary to construct $\mathbf{g}$ according to Corollary \ref{corollary: kernel selection as TMFs and unitary RMFs}, it was shown in \cite{Acc_Algebra_MF} that existing MFs can be combined to construct infinitely many new modulating functions in a structured way. Therefore, it is possible to specify the kernel $\mathbf{g}$ as a matrix of modulating functions for any finite $n$, leading to the following novel control law for LTV systems, denoted as modulating function regulator (MFR), which is directly obtained from the proposed framework.

\begin{thm}\label{theorem: MF-based MFR}
    Consider modulation operator \eqref{eq: modulation definition}, a controllable system given by \eqref{eq: state eq} satisfying $\rank~\mathbf{B}(t)=m$ for all $t$, and auxiliary system \eqref{eq: aux system state eq}-\eqref{eq: aux system out eq}. Then, by selecting $\mathbf{g}(\tau) \in \mathbb{R}^{n \times n}$ according to Corollary \ref{corollary: kernel selection as TMFs and unitary RMFs} and selecting $t_0=t-T$ and $t_1=t$, a moving window fixed-time modulating function regulator can be approximated using the fixed-step trapezoidal rule as
    \begin{equation}\label{eq: fixed-time control law}
        \mathbf{u}(t)=\mathbf{B}^\dagger(t) \bigl( \bm{\uplambda}^\top(0)\mathbf{x}(t) + \bm{\uplambda}^\top(T) \mathbf{x}(t-T)+2\mathbf{F} \bigr),
    \end{equation}
    where $\mathbf{B}^\dagger(t)$ is the pseudoinverse of $\mathbf{B}(t)$ defined as
    \begin{equation}\label{eq: pseudoinverse of B definition}
        \mathbf{B}^\dagger(t):=\left(\mathbf{B}^\top(t) \mathbf{B}(t)\right)^{-1} \mathbf{B}^\top,
    \end{equation}
    and
    \begin{equation}\label{eq: fixed_time control law F terms}
        \mathbf{F}:=\sum_{k=1}^{N-1} \begin{array}{l}
             \bm{\uplambda}^\top(T-k\Delta t) \mathbf{x}(t-T+k\Delta t)\\
             - \mathbf{v}^\top(T-k\Delta t) \mathbf{u}(t-T+k\Delta t)
        \end{array} ,
    \end{equation}
    with $N \in \mathbb{N}_{>0}$ partitions and step-size $\Delta t \in \mathbb{R}_{>0}$.
\end{thm}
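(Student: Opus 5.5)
Starting from Proposition \ref{prop: trajectory constraint}, the input must satisfy the integral equation $\langle\mathbf{v},\mathbf{u}\rangle=\langle\bm{\uplambda},\mathbf{x}\rangle$ on the window $[t_0,t_1]=[t-T,t]$. By Corollary \ref{corollary: kernel selection as TMFs and unitary RMFs}, the chosen $\mathbf{g}$ satisfies $\mathbf{g}(0)=\mathbf{I}$, $\mathbf{g}(T)=\bm{0}$ and $\rank~\mathsf{S}(\mathbf{v})=n$. Proposition \ref{proposition: reachability gramian} then guarantees that a solution exists, and Proposition \ref{prop: Gamma(T)=-I} gives $\bm{\Psi}(t,t-T)=\bm{0}$, so $\mathbf{x}(t)=\bm{0}$ once a full window has elapsed. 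I would first write this equation explicitly in the moving-window variables:
\begin{equation*}
\int_{t-T}^{t}\Bigl(\mathbf{v}^\top(t-\tau)\mathbf{u}(\tau)-\bm{\uplambda}^\top(t-\tau)\mathbf{x}(\tau)\Bigr)\dd\tau=\bm{0}.
\end{equation*}
Here $\bm{\uplambda}=\mathbf{g}'-\mathbf{A}^*\mathbf{g}$ is computed directly from the prescribed $\mathbf{g}$, and $\mathbf{v}=\mathbf{B}^*\mathbf{g}$.

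Next I would discretize with the fixed-step trapezoidal rule, using $N$ partitions and $\Delta t=T/N$, at the nodes $\tau=t-T+k\Delta t$. The integrand is evaluated at kernel argument $T-k\Delta t$. The rule weights the endpoints $k=0$ and $k=N$ by $\Delta t/2$ and the interior nodes by $\Delta t$. Multiplying through by $2/\Delta t$, or equivalently absorbing $\Delta t$ into the kernels, puts the interior sum exactly in the form of $2\mathbf{F}$ in \eqref{eq: fixed_time control law F terms}. The endpoint contributions are $\bm{\uplambda}^\top(0)\mathbf{x}(t)$ and $\bm{\uplambda}^\top(T)\mathbf{x}(t-T)$ on the state side, and $\mathbf{v}^\top(0)\mathbf{u}(t)$ and $\mathbf{v}^\top(T)\mathbf{u}(t-T)$ on the input side.

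The boundary conditions then simplify these endpoint terms. Since $\mathbf{g}(T)=\bm{0}$, we get $\mathbf{v}^\top(T)=\mathbf{g}^\top(T)\mathbf{B}(t-T)=\bm{0}$, so the past-input endpoint drops out. Since $\mathbf{g}(0)=\mathbf{I}$, the adjoint convention \eqref{eq: adjoint operator} gives $\mathbf{v}^\top(0)=\mathbf{B}(t)$. The discretized equation therefore reduces to
\begin{equation*}
\mathbf{B}(t)\mathbf{u}(t)=\bm{\uplambda}^\top(0)\mathbf{x}(t)+\bm{\uplambda}^\top(T)\mathbf{x}(t-T)+2\mathbf{F}.
\end{equation*}
The current input $\mathbf{u}(t)$ appears only on the left-hand side, while the right-hand side involves only current and past states and past inputs, so the control law is causal and implementable recursively. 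Because $\rank~\mathbf{B}(t)=m$, the matrix $\mathbf{B}^\top\mathbf{B}$ is invertible. Multiplying on the left by $\mathbf{B}^\dagger(t)$ as in \eqref{eq: pseudoinverse of B definition} yields \eqref{eq: fixed-time control law}.

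The main obstacle is this last inversion. The equation is $n$-dimensional, but $\mathbf{u}(t)$ lives in $\mathbb{R}^m$, so applying $\mathbf{B}^\dagger$ gives the least-squares solution. It is exact only when the right-hand side lies in $\text{range}(\mathbf{B}(t))$. I would justify it as follows. Proposition \ref{proposition: reachability gramian} ensures the exact integral equation is solvable over the whole window. The remaining freedom in $\mathbf{u}$ on the interior nodes, which lies in the null space of $\mathsf{R}$, can then be used so that the residual at the current node lies in $\text{range}(\mathbf{B}(t))$. Any leftover mismatch is attributed to the $\mathcal{O}(\Delta t^2)$ quadrature error, which is consistent with the statement claiming only an \emph{approximation} of the regulator. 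Finally, the moving window with $t_1=t$ and $t_0=t-T$ gives convergence after time $T$ independently of the initial state, which is the fixed-time property.
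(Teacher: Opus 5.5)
Your derivation of \eqref{eq: fixed-time control law}--\eqref{eq: fixed_time control law F terms} follows the paper's proof. You apply the trapezoidal rule to $\langle\mathbf{v},\mathbf{u}\rangle=\langle\bm{\uplambda},\mathbf{x}\rangle$ on $[t-T,t]$ and multiply through by $2/\Delta t$. You then simplify the endpoint terms using $\mathbf{g}(0)=\mathbf{I}$ and $\mathbf{g}(T)=\bm{0}$; your $\mathbf{v}^\top(0)=\mathbf{B}(t)$ is the correctly transposed form of the paper's $\mathbf{v}(0)=\mathbf{B}(t)$. Finally you left-multiply by $\mathbf{B}^\dagger(t)$. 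Up to that point the proposal is correct and covers everything the theorem asserts.

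Your final paragraph, however, should be removed, because its argument does not hold.

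\begin{itemize}
\item The interior-node inputs $\mathbf{u}(t-T+k\Delta t)$ are past values, already fixed by the recursion, as you yourself note when arguing causality. So at time $t$ there is no remaining freedom to push the residual into $\text{range}(\mathbf{B}(t))$.
\item The null space of $\mathsf{R}$ belongs to the continuous-time, whole-window equation. The open-loop solution guaranteed by Proposition \ref{proposition: reachability gramian} need not coincide with the pointwise recursive law.
\item The leftover mismatch $(\mathbf{I}-\mathbf{B}(t)\mathbf{B}^\dagger(t))\bigl(\bm{\uplambda}^\top(0)\mathbf{x}(t)+\bm{\uplambda}^\top(T)\mathbf{x}(t-T)+2\mathbf{F}\bigr)$ is a projection residual, not an $\mathcal{O}(\Delta t^2)$ quadrature error. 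Nothing in your argument bounds it.
\end{itemize}

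The paper does not attempt to justify exactness. It simply takes the least-squares solution (see the remark following the theorem), which is why the statement only claims that the regulator ``can be approximated'' by \eqref{eq: fixed-time control law}.
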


\begin{pf}
    First, note that if system \eqref{eq: state eq} is controllable and the kernel $\mathbf{g}(\tau) \in \mathbb{R}^{n \times n}$ is selected according to Corollary \ref{corollary: kernel selection as TMFs and unitary RMFs}, then there exists an input given by \eqref{eq: input from modulation for natural trajectories} that drives the system from $\mathbf{x}(t_0)$ to $\mathbf{x}(t_1)=\bm{0}$ .\\
    Next, select $t_0=t-T$ and $t_1=t$ and consider a fixed-step trapezoidal rule to numerically approximate the integrals, namely
    \begin{equation}
    \begin{aligned}
        \frac{\Delta t}{2} \sum_{k=1}^{N} \mathbf{v}^\top(T-k\Delta t) \mathbf{u}(t_0+k\Delta t) \\+\frac{\Delta t}{2} \sum_{k=1}^{N} \mathbf{v}^\top(T-(k-1)\Delta t) \mathbf{u}(t_0+(k-1)\Delta t) =\\
        \frac{\Delta t}{2} \sum_{k=1}^{N} \bm{\uplambda}^\top(T-k\Delta t) \mathbf{x}(t_0+(k-1)\Delta t) \\+ \frac{\Delta t}{2} \sum_{k=1}^{N} \bm{\uplambda}^\top(T-(k-1)\Delta t) \mathbf{x}(t_0+(k-1)\Delta t)
    \end{aligned}
    \end{equation}
    where $N \in \mathbb{N}_{>0}$ is the number of partitions and $\Delta t \in \mathbb{R}_{>0}$ is the step-size. Then, use the fact that $\mathbf{v}(0)=\mathbf{B}(t)$ for $\mathbf{g}(0)=\mathbf{I}$ and $\mathbf{v}(T)=\bm{0}$ for $\mathbf{g}(T)=\bm{0}$ to obtain
    \begin{equation}
    \begin{aligned}
        \mathbf{B}(t)\mathbf{u}(t)+2\sum_{k=1}^{N-1} \mathbf{v}^\top(T-k\Delta t) \mathbf{u}(t_0+k\Delta t) = \\
        \sum_{k=1}^{N} \bm{\uplambda}^\top(T-k\Delta t) \mathbf{x}(t_0+(k-1)\Delta t) \\+ \sum_{k=1}^{N} \bm{\uplambda}^\top(T-(k-1)\Delta t) \mathbf{x}(t_0+(k-1)\Delta t)
    \end{aligned}        
    \end{equation}
    Finally, assuming that $\rank~\mathbf{B}(t)=m$, reorganizing, and using the pseudoinverse $\mathbf{B}^\dagger$ given by \eqref{eq: pseudoinverse of B definition}, the control law given by \eqref{eq: fixed-time control law}-\eqref{eq: fixed_time control law F terms} is obtained.    \hfill $\blacksquare$
\end{pf}

\begin{rem}
    Due to the direct relation between the pseudoinverse operator and least-squares regression \citep[p.~83]{Luenberger_Optimization_Vector_Spaces}, inputs obtained from control law \eqref{eq: fixed-time control law}-\eqref{eq: fixed_time control law F terms} are optimal in the least-squares-sense with respect to $\mathbf{B}$ and, consequently, will display a relatively small norm. This is also clearly seen in the numerical simulations shown in Section \ref{sec: numerical simulations}.
\end{rem}

\begin{exmp}
    Although the condition $\rank~\mathbf{B}(t)=m$ might seem restrictive, it suffices to take any two linearly dependent inputs $\mathrm{u}_1$ and $\mathrm{u}_2$ and define a virtual input $v$ as their linear combination. As a simple example, consider a two-input system with
    \begin{equation}
        \mathbf{B}=\begin{bmatrix}
            0 & & 0 \\ 1&  & 3t 
        \end{bmatrix},
    \end{equation}
    which does not satisfy $\rank~\mathbf{B}(t)=m$. By simply defining the virtual input $v=\mathrm{u}_1+3t\mathrm{u}_2$, it follows that the resulting $\mathbf{B}$ matrix is $[0, 1]^\top$ and has rank $1$, allowing for the computation of $v$. The mapping from $v$ to $\mathrm{u}_1$ and $\mathrm{u}_2$ can then be done using, e.g., a minimal cost selection. \hfill $\diamond$
\end{exmp}

Note that the control law given in Theorem \ref{theorem: MF-based MFR} explicitly contains the system model, allowing for a simple interpretation of the impact of parameter changes. In addition, the only difference between LTI and LTV systems is that, while the controller gains for LTI systems can be pre-computed and stored in memory, the controller gains for LTV systems must be computed online.

More importantly, it is clear to see that the system trajectory is directly influenced by the selected modulating functions when Proposition \ref{prop: trajectory constraint} and Corollary \ref{corollary: kernel selection as TMFs and unitary RMFs} are employed. In particular during the period $t \in [0, T)$, i.e. while the integration window is still being filled, the integral equation \eqref{eq: input from modulation for natural trajectories} becomes the Volterra equation of the first kind given by
\begin{equation}\label{eq: transient volterra equation}
    \int_{0}^{t} \bm{\uplambda}^\top(t-\tau) \mathbf{x}(\tau) \dd \tau = \int_{0}^t \mathbf{v}^\top (t-\tau) \mathbf{u}(\tau) \dd \tau.
\end{equation}

Although not necessary to obtain fixed-time convergence, how well the controller approximates a solution to \eqref{eq: transient volterra equation} directly dictates the transient behavior of the system before stabilization, which follows directly from Proposition \ref{prop: input modulation equal state modulation} and $\mathbf{g}(0)=\mathbf{I}$, i.e.
\begin{equation}
    \mathbf{x}(t)\hspace{-0.05cm}=\hspace{-0.05cm}\mathbf{g}^\top(t) \mathbf{x}(t_0) + \int_{0}^t \Bigl(\mathbf{v}^\top\hspace{-0.075cm} (t-\tau) \mathbf{u}(\tau)- \bm{\uplambda}^\top\hspace{-0.075cm}(t-\tau) \mathbf{x}(\tau)\Bigr) \dd \tau.
\end{equation}

Thus, although not necessarily matching this trajectory, the transient behavior of the system is explicitly influenced by the choice of modulating functions, and the trajectory in situations where \eqref{eq: transient volterra equation} is satisfied is given by
\begin{equation}\label{eq: transient period MFR}
    \mathbf{x}_{\Psi}(t)=\bm{\Psi}(t, 0)\mathbf{x}(0)=\mathbf{g}^\top(t) \mathbf{x}(0).
\end{equation}
Hence, it is possible to obtain information about how well the control law approximates a solution, if it exists, to the Volterra equation \eqref{eq: transient volterra equation} by simply comparing the system behavior to \eqref{eq: transient period MFR}. 

%% file: 05_Numerical_Examples.tex
\section{Numerical Simulations}\label{sec: numerical simulations}
To illustrate how the proposed approach can be used, numerical simulations of Theorem \ref{theorem: MF-based MFR} are shown below.

\begin{figure}[t]
  \centering
  \includegraphics[width=.49\linewidth]{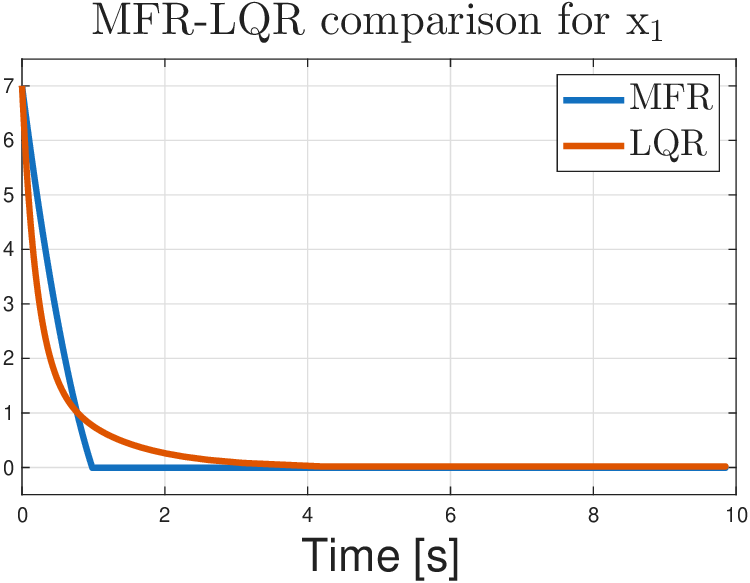}
  \makebox[0pt][r]{%
    \raisebox{3.5em}{%
      \includegraphics[width=.13\linewidth]{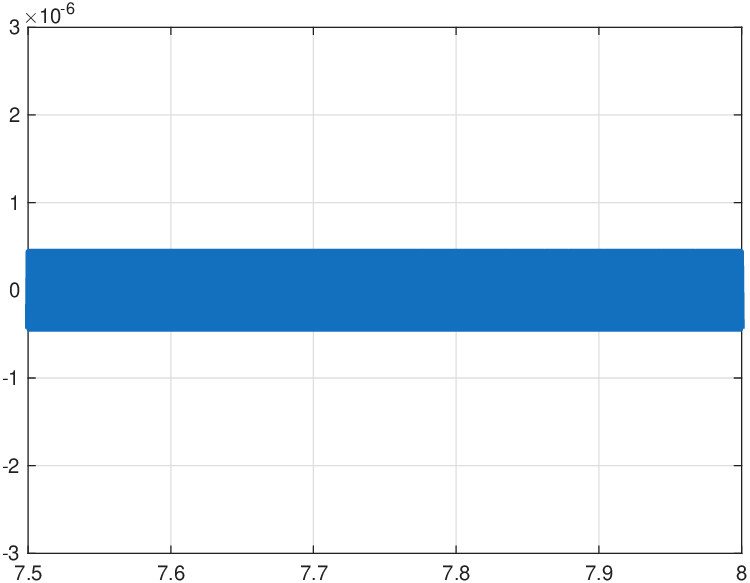}%
    }\hspace*{1.5em}%
  }%
  \includegraphics[width=.49\linewidth]{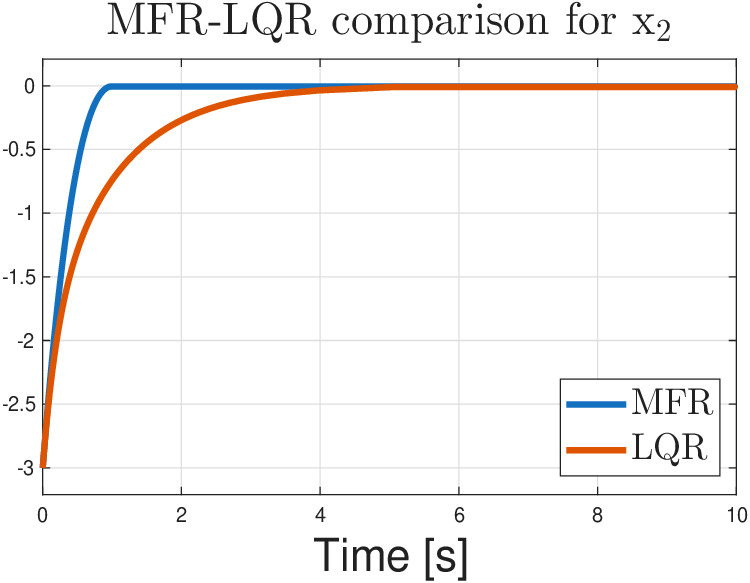}%
  \makebox[0pt][r]{%
    \raisebox{3.5em}{%
      \includegraphics[width=.13\linewidth]{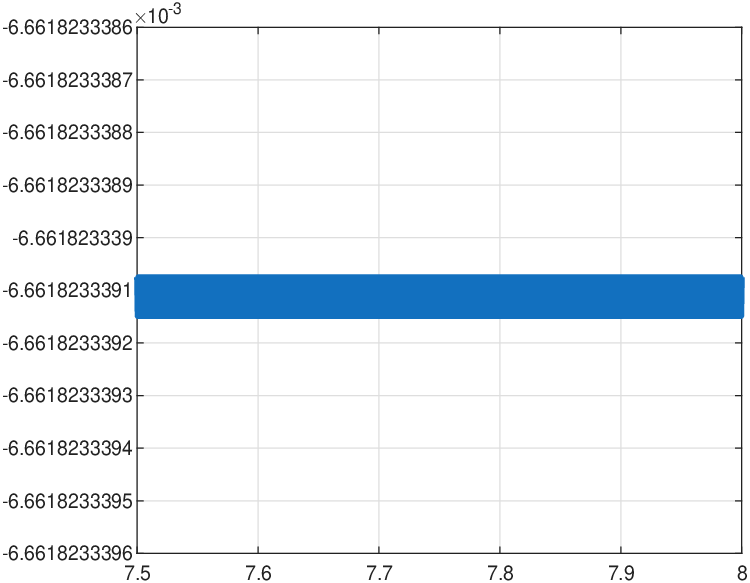}%
    }\hspace*{1.5em}%
  }%
  \caption{Comparison between LQR and MFR for \eqref{eq: simple unstable system}, with $T=1s$ and $N=2000$. Zoomed-in intervals are between $7.5s$ and $8s$.}
  \label{fig: simple system x1 and x2 comparison}
\end{figure}

\subsection{Unstable LTI system}
Consider an unstable system given by
\begin{equation}\label{eq: simple unstable system}
    \dot{\mathbf{x}}=\begin{bmatrix}
        4 & & -3 \\ 1 & & 0
    \end{bmatrix} \mathbf{x} + \begin{bmatrix}
        2 \\ 0
    \end{bmatrix} \mathbf{u},
\end{equation}
which has poles at $3$ and $1$, and consider the initial conditions $\mathbf{x}(0)=[7, -3]^\top$. Next, select the modulating functions
\begin{gather}
    \mathrm{g}_{11}(\tau)= \left( 1-\frac{\tau}{T}\right)\label{eq: g11 selection for simple LTI system}\\
    \mathrm{g}_{12}(\tau)=\left(\frac{\tau}{T}\right)^3 \left( 1-\frac{\tau}{T}\right)^4\label{eq: g12 selection for simple LTI system}\\
    \mathrm{g}_{21}(\tau)= \left(\frac{\tau}{T}\right) \left( 1-\frac{\tau}{T}\right)\label{eq: g21 selection for simple LTI system}\\
    \mathrm{g}_{22}(\tau)= \left( 1-\frac{\tau}{T}\right)^2, \label{eq: g22 selection for simple LTI system}
\end{gather}
together with $T=1s$. Then, by direct computation of $\mathsf{S}(\mathbf{v})$, it follows that
\begin{equation}
    \mathsf{S}(\mathbf{v})\approx\begin{bmatrix}
        1.3333 & 0.0079 \\
        0.0079 & 0.0001
    \end{bmatrix},
\end{equation}
which is clearly rank 2.

By computing the derivatives of these modulating functions and sampling them, the control law \eqref{eq: fixed-time control law}-\eqref{eq: fixed_time control law F terms} can be computed in real-time, which was implemented assuming $\mathbf{x}(t)=\bm{0}$ and $\mathbf{u}(t)=\bm{0}$ for $t<0$. The number of partitions was selected as $N=2000$, i.e. $\Delta t=5 \cdot 10^{-4}s$, and a discretized LQR controller with tuning parameters
\begin{equation}
    \mathbf{Q}=\begin{bmatrix}
        \frac{1}{25} & & 0 \\
        0 & & \frac{1}{25}
    \end{bmatrix}, \; \; \; \mathrm{R}=\frac{1}{100},
\end{equation}
and the same step-size was used as a benchmark. The results are shown in Figure \ref{fig: simple system x1 and x2 comparison} and Figure \ref{fig: actuation use and interation error}.

\begin{figure}[t]
  \centering
  \includegraphics[width=.49\linewidth]{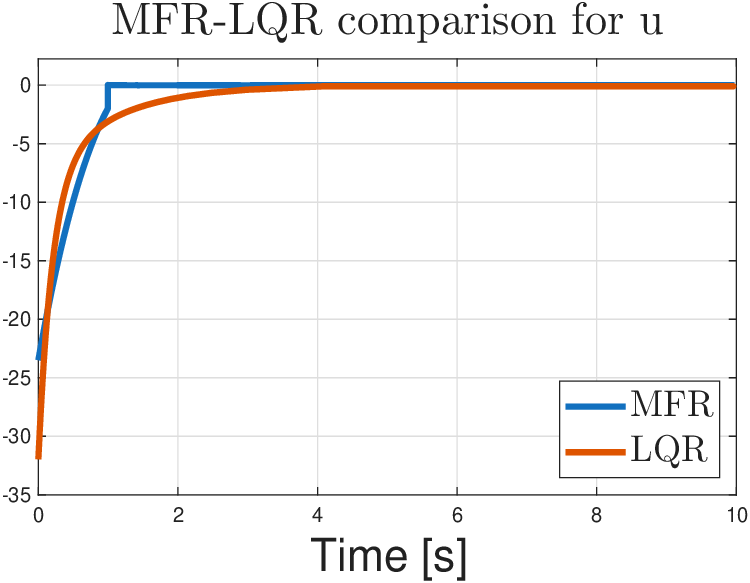}
  \makebox[0pt][r]{%
    \raisebox{3.5em}{%
      \includegraphics[width=.13\linewidth]{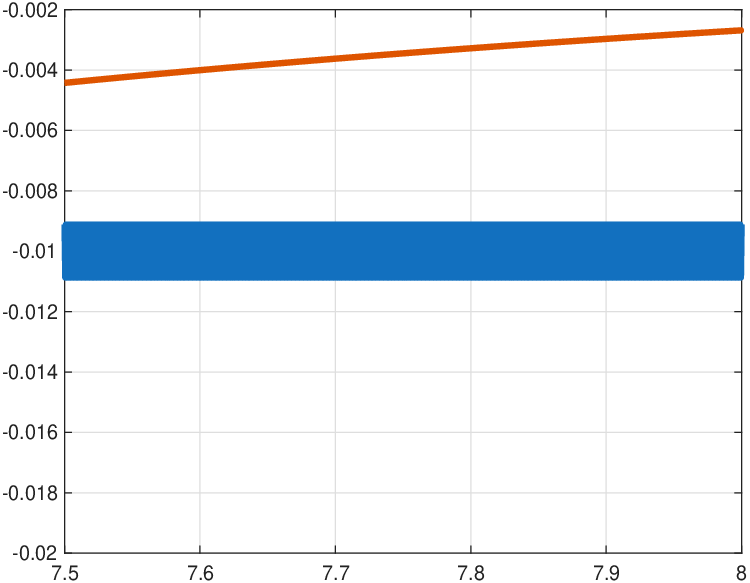}
    }\hspace*{1.5em}%
  }%
  \includegraphics[width=.49\linewidth]{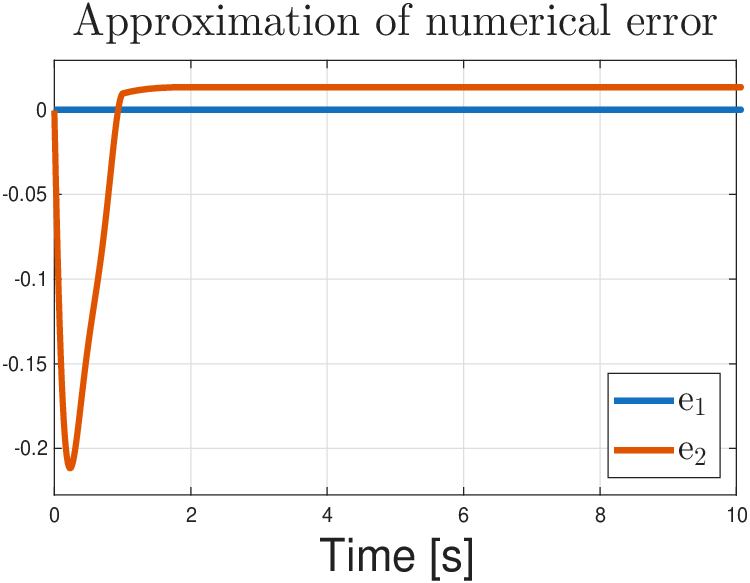}
  \makebox[0pt][r]{
    \raisebox{3.5em}{%
      \includegraphics[width=.13\linewidth]{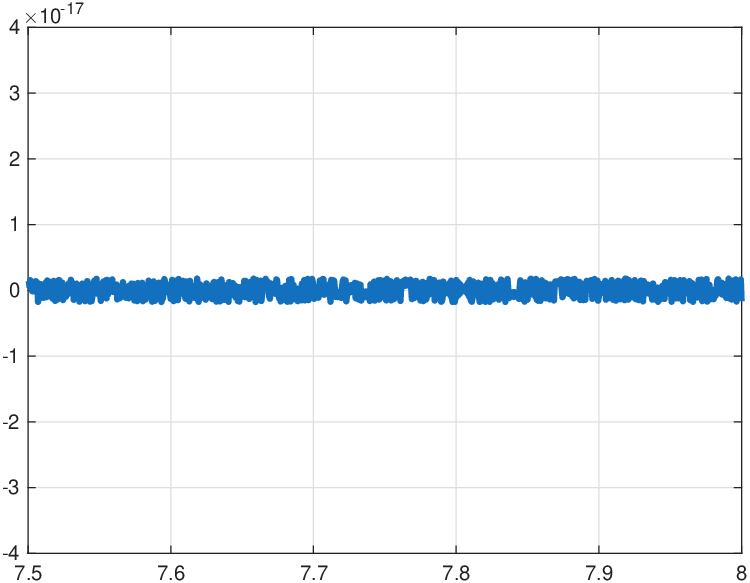}
    }\hspace*{0.5em}%
    \raisebox{3.5em}{%
      \includegraphics[width=.13\linewidth]{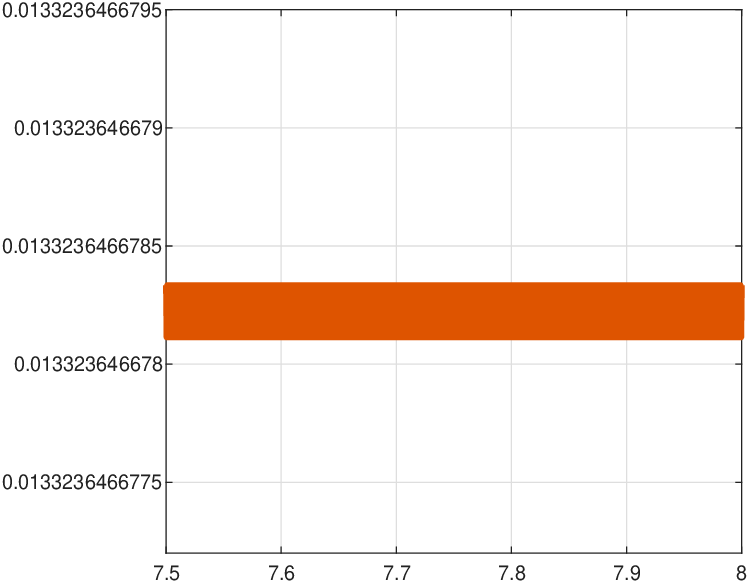}
    }\hspace*{0.5em}%
  }%
  \caption{Comparison between LQR and MFR control for \eqref{eq: simple unstable system}, with $T=1s$ and $N=2000$. Zoomed-in intervals are between $7.5s$ and $8s$.}
  \label{fig: actuation use and interation error}
\end{figure}

It is clear to seen in Figure \ref{fig: simple system x1 and x2 comparison} that, not only did the proposed approach stabilize the system, but it indeed did so in a fixed time, without needing to discuss the eigenvalues of the closed-loop system. Moreover, the RMS value of the state $\mathrm{x}_1$ with an LQR controller was $0.8857$, while the RMS for $\mathrm{x}_1$ with the MFR was $1.1455$: an increase of about $29\%$ with respect to the LQR value. In contrast, the RMS value obtained for $\mathrm{x}_2$ with the MFR was $0.4017$, while for the LQR was $0.5262$, an increase of about $31\%$ with respect to the MFR value. By combining the RMS value for both states in a vector, one can also compare their norms, obtained as $1.0302$ for the LQR, and $1.2139$ for the MFR.

Another point of comparison is the actuation use, seen in Figure \ref{fig: actuation use and interation error}. Both LQR and MFR input signals are within the same range, and also display similar RMS values: $3.9191$ for the LQR and $3.9613$ for the MFR. This indicates that, despite the fact that the modulating functions were arbitrarily chosen, the obtained performance using the MFR is comparable to the one obtained with the LQR, as also seen in Table \ref{tab: unstable system example RMS comparison}.

\begin{table}[h]
    \centering
    \caption{Comparison between RMS values obtained using LQR and MFR for \eqref{eq: simple unstable system}.}
    \label{tab: unstable system example RMS comparison}
    \begin{tabular}{|l|c|c|c|c|}
        \hline
         \textit{RMS} & $\mathrm{x}_1$ & $\mathrm{x}_2$ & $||\mathbf{x}_{\text{RMS}}||$ & $\mathrm{u}$ \\
         \hline
        LQR & 0.8857 & 0.5262 & 1.0302 & 3.9191 \\
        \hline
        MFR & 1.1455 & 0.4017 & 1.2139 & 3.9613\\
        \hline
    \end{tabular}    
\end{table}

Note, however, that the states do not converge exactly to $0$, but chatter around a stable value, as shown in the zoomed-in segments in Figure \ref{fig: simple system x1 and x2 comparison}. These rapid oscillations, and bias induction, can be attributed to errors associated with numerically implementing \eqref{eq: input from modulation for natural trajectories} using Theorem \ref{theorem: MF-based MFR}. A similar effect is seen in the generated input signal, shown on the left-side of Figure \ref{fig: actuation use and interation error}.

\begin{figure}[t]
    \centering
    \includegraphics[width=0.49\linewidth]{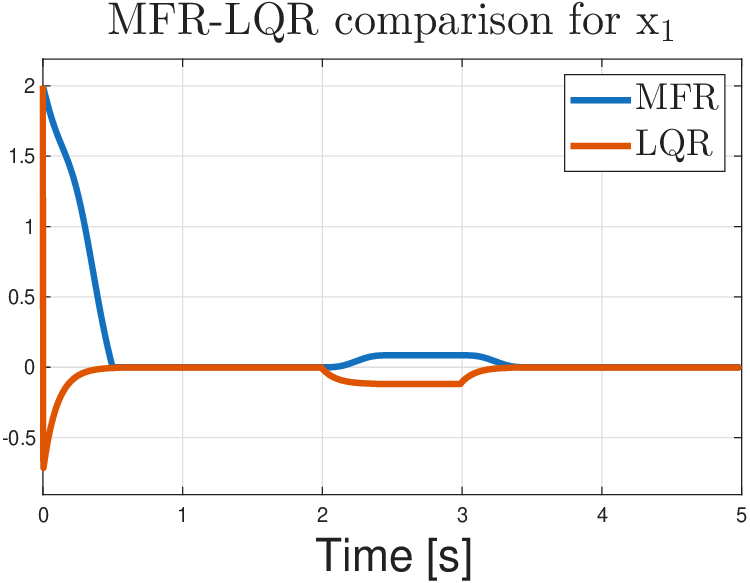}
    \includegraphics[width=0.49\linewidth]{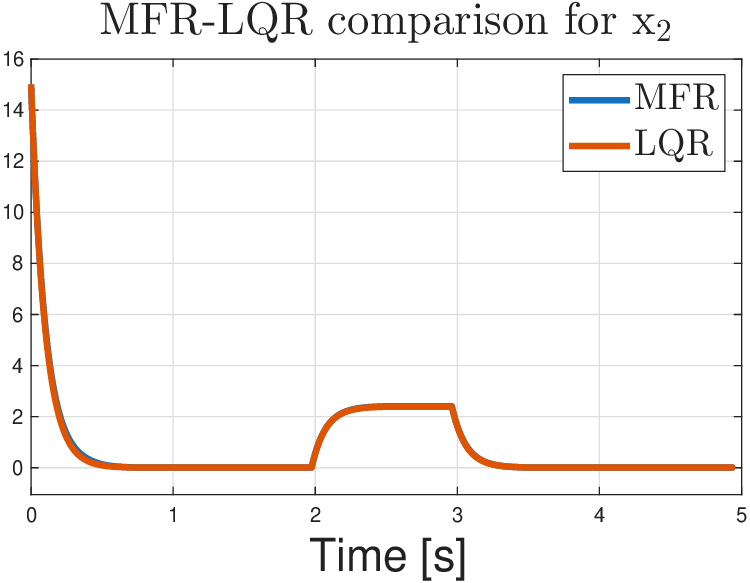}
    \caption{Comparison between state trajectories obtained using LQR and MFR, with $T=0.5$ and $N=2000$, for \eqref{eq: DC motor model}. Note that, even though the convergence time was expected to be $T=0.5s$, it took about $0.7s$ for the system to converge.}
    \label{fig: LTV DC motor state comparison}
\end{figure}

\begin{figure}[b]
    \centering
    \includegraphics[width=0.49\linewidth]{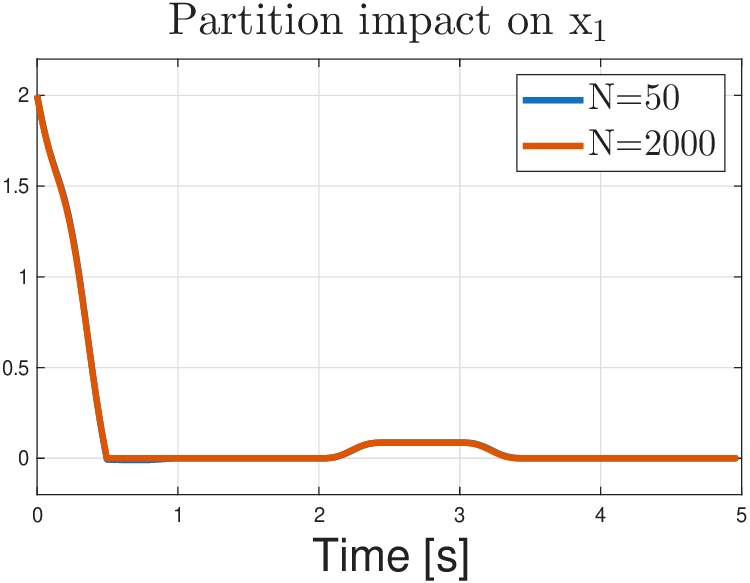}
    \includegraphics[width=0.49\linewidth]{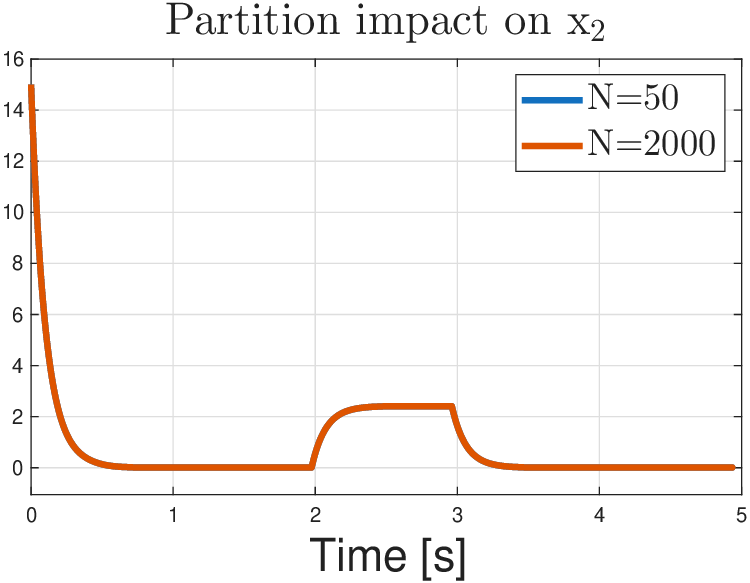}
    \caption{State trajectory comparison for \eqref{eq: DC motor model} using an MFR with $N=2000$ and $N=50$, where $T=0.5s$.}
    \label{fig: LTV DC motor state comparison N=50}
\end{figure}

Nonetheless, by computing $\mathbf{e}(t):=\langle \bm{\uplambda}, \mathbf{x} \rangle -\langle \mathbf{v}, \mathbf{u} \rangle$, it is possible to obtain an estimate for the numerical errors introduced, as seen on the right-side of Figure \ref{fig: actuation use and interation error}. Even though the numerical error is close to zero after the integration window is full, it does not become identically zero, and is not exactly centered around $0$ for $\mathrm{e}_2$, indicating a bias.

\subsection{DC Motor with Time-Varying Resistance}

Consider now a DC motor with a time-varying resistance and an external unmatched disturbance, given by
\begin{equation}\label{eq: DC motor model}
    \dot{\mathbf{x}}=\begin{bmatrix}
        -\frac{R(t)}{L} & -\frac{k_b}{L} \\
        \frac{k_t}{J} & -\frac{b}{J}
    \end{bmatrix} \mathbf{x} + \begin{bmatrix}
        \frac{1}{L} \\ 0
    \end{bmatrix} \mathrm{u} + \begin{bmatrix}
        0 \\ 1
    \end{bmatrix} d
\end{equation}
where $R(t)=1+t/2$, $L=0.5$, $k_t=k_b=0.02$, $b=0.01$, and $J=0.01$, having the disturbance $d$ being a step of amplitude $24$, which is active for $t \in [2, 3]$. Moreover, the initial conditions were selected as $\mathbf{x}(0)=[2, 15]^\top$.

Next, select the modulating functions $\mathrm{g}_{11}(\tau)$ as \eqref{eq: RMF 11}, 
$\mathrm{g}_{12}(\tau)$ as \eqref{eq: TMF 12},
\begin{gather}
    \mathrm{g}_{21}(\tau)=\left(\frac{\tau}{T}\right)^3 \left( 1-\frac{\tau}{T}\right)^3,\\
    \mathrm{g}_{22}(\tau)=\sech(3\tau) \frac{\tanh(\tau-T)}{\tanh(-T)},
\end{gather}
along with $T=0.5s$, $N=2000$, $\Delta t = 2.5 \cdot 10^{-4}s$, and $\mathbf{x}(t)=\bm{0}$ and $\mathbf{u}(t)=\bm{0}$ for $t<0$. As in the previous example, a discretized LQR with tuning parameters
\begin{equation}
    \mathbf{Q}=\begin{bmatrix}
        \frac{1}{10} & & 0 \\
        0 & & \frac{1}{10}
    \end{bmatrix}, \; \; \; \mathrm{R}=\frac{10^{-4}}{(24)^2},
\end{equation}
and the same step-size was used as a baseline comparison.

By direct computation, it follows that
\begin{equation}
    \mathsf{S}(\mathbf{v})\approx\begin{bmatrix}
        0.6687 & & -0.1879 \\
        -0.1879 & & 0.1123
    \end{bmatrix},
\end{equation}
which, again, clearly has rank 2.

\begin{figure}[t]
    \centering
    \includegraphics[width=0.49\linewidth]{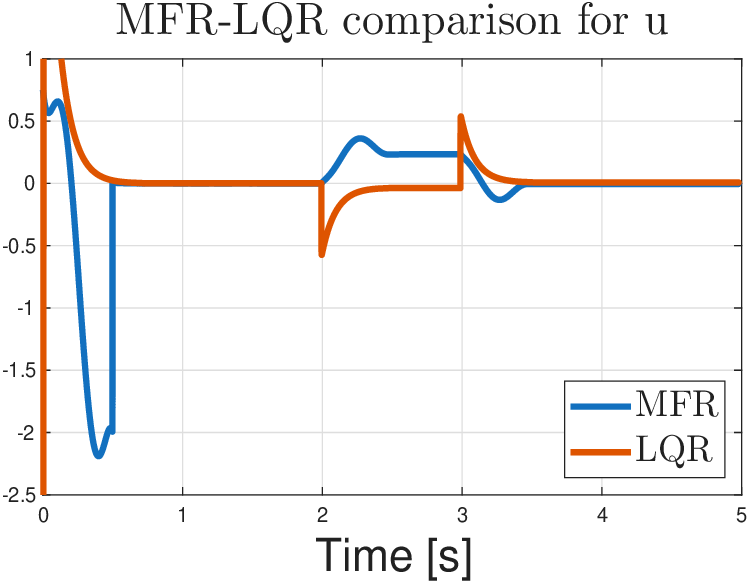}
    \includegraphics[width=0.49\linewidth]{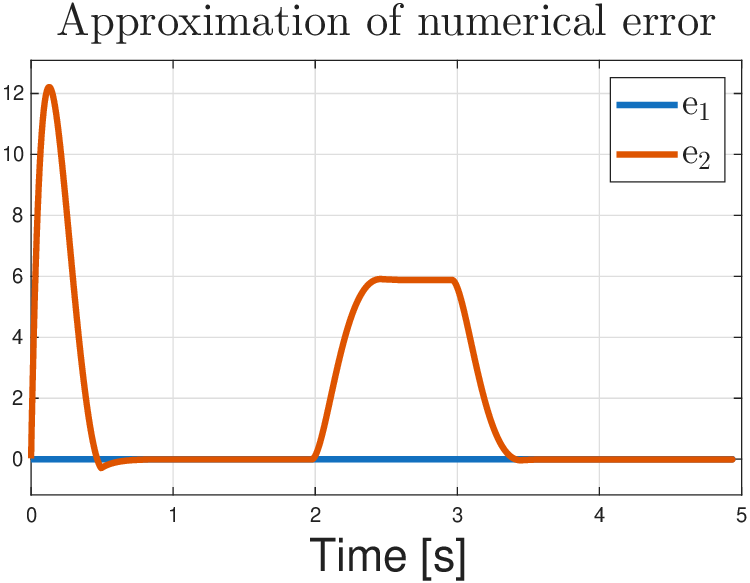}
    \caption{Comparison between actuation use (left) obtained using LQR and MFR, with $T=0.5$ and $N=2000$, for \eqref{eq: DC motor model}; and an approximation of the numerical errors (right). Note that the numerical error only converged at $t\approx 0.7s$.}
    \label{fig: LTV DC motor input and integration error}
\end{figure}

\begin{figure}[b]
    \centering
    \includegraphics[width=0.49\linewidth]{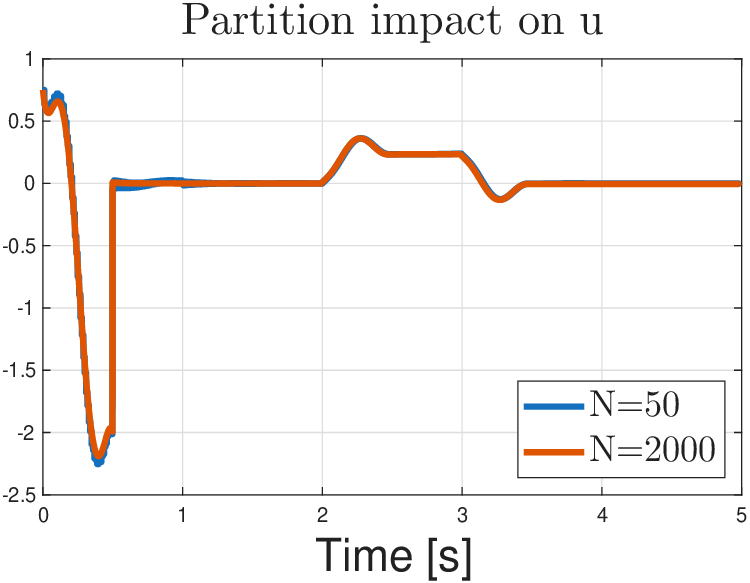}
    \includegraphics[width=0.49\linewidth]{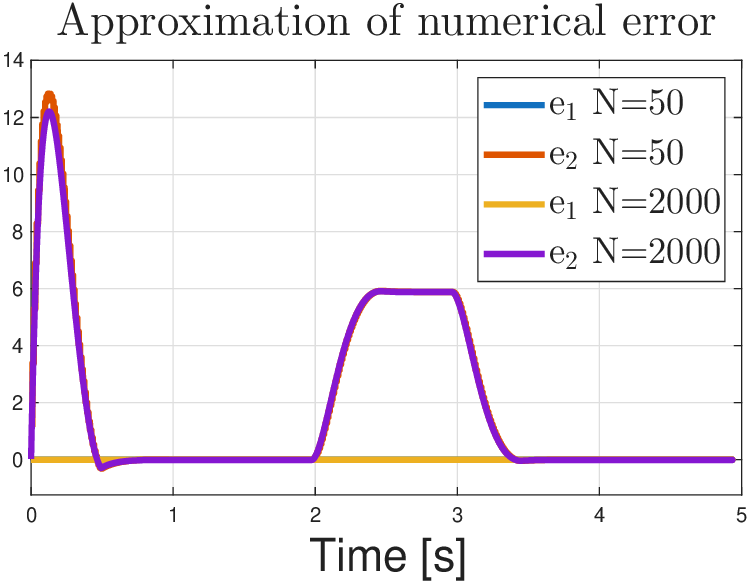}
    \caption{Actuation use and integration error comparison for \eqref{eq: DC motor model} using an MFR with $T=0.5s$, $N=2000$, and $N=50$.}
    \label{fig: LTV DC motor input and integration error N=50}
\end{figure}

The simulation results are shown in Figure \ref{fig: LTV DC motor state comparison}, where it is visible that the performance of the MFR controller is comparable to the benchmark LQR: in fact, the $\mathrm{x}_2$ trajectories almost perfectly overlap.

When discussing the disturbance rejection within the period between $2s$ and $3.5s$, the peak for $\mathrm{x}_1$ using LQR was around $-0.11$, while for the MFR was around $0.086$. As for $\mathrm{x}_2$, both peaks are around $2.4$, indicating similar levels of disturbance rejection capabilities.

\begin{figure}[t]
    \centering
    \includegraphics[width=0.49\linewidth]{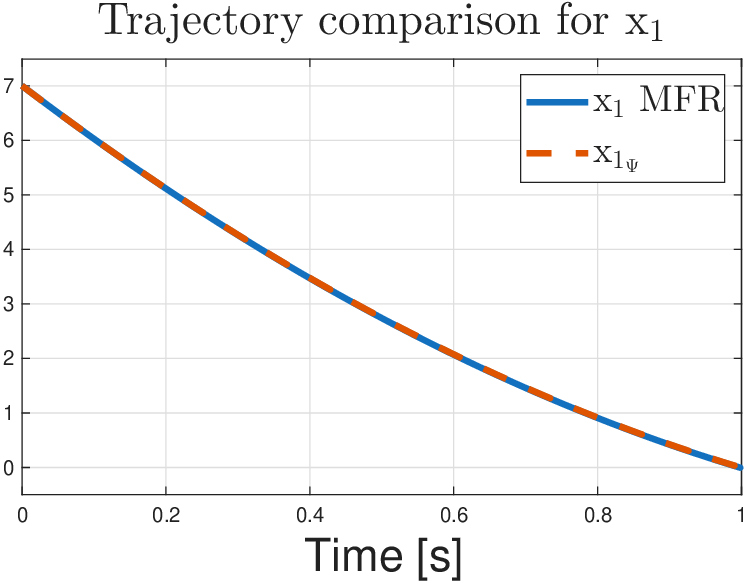}
    \includegraphics[width=0.49\linewidth]{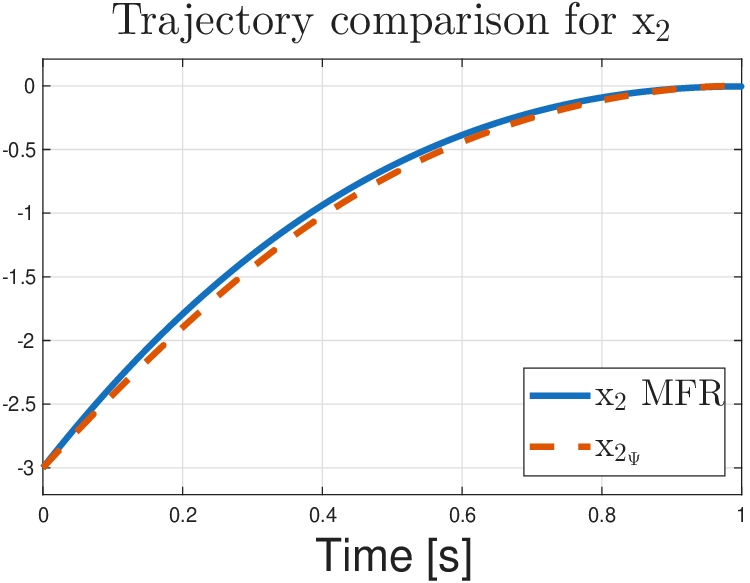}
    \caption{Comparison of the transient behavior of \eqref{eq: simple unstable system} with an MFR to the trajectory $\mathbf{x}_{\Psi}$ given by \eqref{eq: transient period MFR}.}
    \label{fig: trajectory comparison unstable LTI example}
\end{figure}

Regarding the actuation use, shown in Figure \ref{fig: LTV DC motor input and integration error} on the left, the MFR controller effectively avoided sharp peaks within the transient period, with a striking RMS value of $0.4476$, while the input obtained from the LQR had an RMS value of $20.1392$. Similar to the previous example, the RMS values are shown in Table \ref{tab: LTV DC motor example RMS comparison} for an easier comparison, and it is clear to see that, despite the significantly lower actuation use in the MFR than in the LQR, the norm of the RMS states was still similar.

\begin{table}[h]
    \centering
    \caption{Comparison between RMS values obtained using LQR and MFR for \eqref{eq: DC motor model}.}
    \label{tab: LTV DC motor example RMS comparison}
    \begin{tabular}{|l|c|c|c|c|}
        \hline
         \textit{RMS} & $\mathrm{x}_1$ & $\mathrm{x}_2$ & $||\mathbf{x}_{\text{RMS}}||$ & $\mathrm{u}$ \\
         \hline
        LQR & 0.0904 & 1.8086 & 1.8108 & 16.6068 \\
        \hline
        MFR & 0.3976 & 1.8299 & 1.8726 & 0.4476\\
        \hline
    \end{tabular}    
\end{table}

Nonetheless, it is also possible to see in Figure \ref{fig: LTV DC motor state comparison} that the system takes slightly longer than $T=0.5s$ to converge, in particular for $\mathrm{x}_2$. This can again be associated with approximation errors in the numerical solution, as seen in the right part of Figure \ref{fig: LTV DC motor input and integration error}. 

In addition, one can evaluate the impact of the selection of number of partitions during numerical integration. Instead of using $N=2000$, i.e. $\Delta t=2.5 \cdot 10^{-4}s$, consider the case where $N=50$, i.e. $\Delta t=0.01s$, shown in Figures \ref{fig: LTV DC motor state comparison N=50} and \ref{fig: LTV DC motor input and integration error N=50}. Despite the drastic decrease in the number of partitions, the integration error and state trajectories remained similar, but the chattering became visible in some segments of the input, such as around $t=0.5s$. This indicates that, although the proposed approach is influenced by the number of partitions, it does not seem to be extremely sensitive to it.

\subsection{Transient Behavior}

Lastly, it is also a topic of interest to investigate the transient trajectory. As discussed in the text around \eqref{eq: transient volterra equation}-\eqref{eq: transient period MFR}, the selection of modulating functions directly influences the transient behavior of the system, a fact that is evident during the numerical simulations. 

In the first example, i.e. \eqref{eq: simple unstable system}, the trajectory given by \eqref{eq: transient period MFR} is almost perfectly followed for both $\mathrm{x}_1$ and $\mathrm{x}_2$, as seen in Figure \ref{fig: trajectory comparison unstable LTI example}, clearly showing how the selection of MFs influences the transient behavior. Moreover, by comparing the period $[0, 0.5]$ in Figure \ref{fig: actuation use and interation error} with Figure \ref{fig: trajectory comparison unstable LTI example}, it is clear to see how the approximated numerical error correlates with the distance to the trajectory $\mathbf{x}_{\Psi}$: a negative error indicates that the current trajectory is above $\mathbf{x}_{\Psi}$ and a positive error shows that it is below it.

\begin{figure}[t]
    \centering
    \includegraphics[width=0.49\linewidth]{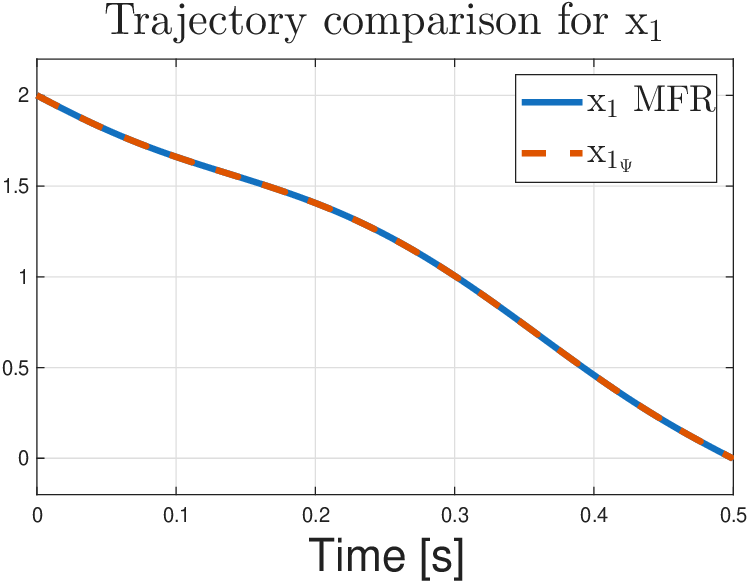}
    \includegraphics[width=0.49\linewidth]{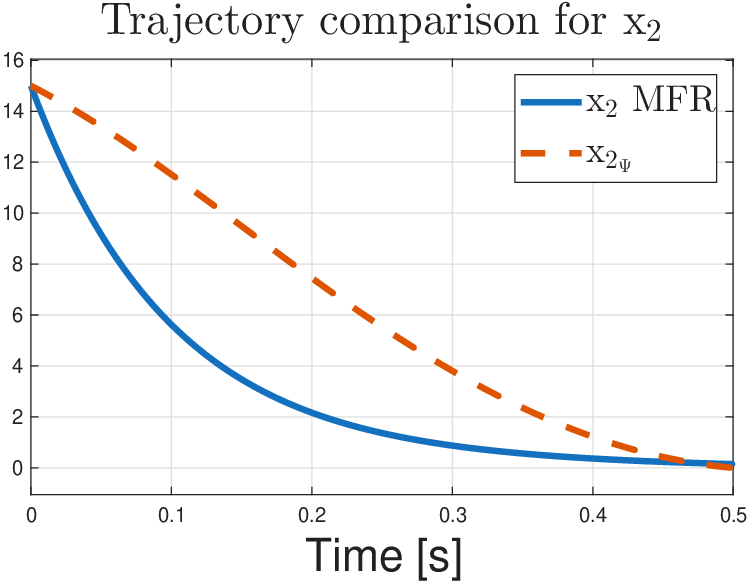}
    \caption{Comparison of the transient behavior of \eqref{eq: DC motor model} with an MFR to the trajectory $\mathbf{x}_{\Psi}$ given by \eqref{eq: transient period MFR}.}
    \label{fig: trajectory comparison LTV DC motor}
\end{figure}

A similar comparison can be made with the results related to \eqref{eq: DC motor model}, and they are shown in Figure \ref{fig: trajectory comparison LTV DC motor}. Although the trajectories almost perfectly aligns for $\mathrm{x}_1$, the same cannot be said about $\mathrm{x}_2$, which remained below the ideal trajectory for almost the entire window. This is also visible in Figure \ref{fig: LTV DC motor input and integration error}, where a large positive error was obtained from the estimate $\mathrm{e}_2$ within $[0, T]$, indicating that the current trajectory for $\mathrm{x}_2$ was below $\mathrm{x}_{2_\Psi}$.

Nevertheless, it is important to note that, even in situations where the Volterra equation \eqref{eq: transient volterra equation} is not satisfied during the transient regime, e.g. in Figure \ref{fig: trajectory comparison LTV DC motor}, the system still converges to the origin at time $t\approx T$ and displays significant tolerance to disturbances (see Figure \ref{fig: LTV DC motor state comparison}).

%% file: 06_Conclusion.tex
\section{Concluding Remarks}\label{sec: conclusion}

In this paper, the concept of dual modulating functions and dual modulations is introduced, showing that the modulating function method is not only an estimation framework, but also a control theory framework for LTV MIMO systems of arbitrary order. In addition, necessary and sufficient conditions for the existence of the proposed control law were obtained, together with the relation between the proposed approach and well-known asymptotic controllers, in particular state feedback, output feedback, and LTI sliding mode control. In addition, necessary and sufficient conditions were also obtained for finite-time stabilizing controllers, and the self-modulation operator and controllability bracket were defined and shown to be practical tools to evaluate the existence of the proposed control laws for controllable systems, with the reachability gramian itself being obtained as a particular case. Moreover, a fixed-time control law for LTV MIMO systems was proposed by combining explicit selection of the modulating functions with a dual auxiliary system.

The proposed fixed-time controller was then illustrated with numerical simulations, displaying performance comparable to LQR algorithms, and even having a lower actuation use than the benchmark LQR in one of the examples. Furthermore, the proposed method displayed reasonable robustness to the number of partitions in the numerical integration algorithm and external disturbances.

Additionally, the proposed approach explicitly utilizes the system model, allowing for direct evaluation of parameter changes, and has the same structure for both LTI and LTV system, with the difference being that, while LTI systems allow for pre-computation of the gains, LTV systems require computing them online.

Further work on this topic could involve an implementation of the method on embedded hardware for performance evaluation; continuing the investigation on the impact of the modulating function selection on the transient response and numerical robustness; further investigating the robustness of the approach to external disturbances; extending the method for other control laws; extending the proposed regulator to tracking problems; or studying which families of kernels satisfy the controllability bracket and the finite-time stabilization conditions.